\DeclareMathAlphabet{\pazocal}{OMS}{zplm}{m}{n}
\tikzset{
  treenode/.style = {align=center, inner sep=0pt, text centered,
    font=\sffamily},
  arn_n/.style = {treenode, circle, black, font=\sffamily\bfseries, draw=black,
    fill=white, text width=1.5em},
  arn_r/.style = {treenode, circle, bg, draw=bg, 
    text width=1.5em},
  arn_x/.style = {treenode, circle, purple,font=\sffamily\bfseries,
   text width=2em}
}
\tikzset{
  invisible/.style={opacity=0},
  visible on/.style={alt={#1{}{invisible}}},
  alt/.code args={<#1>#2#3}{%
    \alt<#1>{\pgfkeysalso{#2}}{\pgfkeysalso{#3}} 
  },
}
\tikzset{
    photon/.style={decorate, decoration={snake}, draw=red}}
\tikzset{electron/.style={draw=blue, thick, postaction={decorate},decoration={markings,mark=at position .75 with {\arrow[draw=blue]{>}}}}}
\tikzset{electron2/.style={draw=red,very thick, postaction={decorate},decoration={markings,mark=at position .55 with {\arrow[draw=red,very thick]{>}}}}}
\tikzset{gluon/.style={->,thick,decorate, draw= mLightBrown,
        decoration={coil,amplitude=4pt, segment length=5pt}}}
 \tikzset{gluon2/.style={thick,decorate, draw=mLightBrown,
        decoration={coil,amplitude=4pt, segment length=5pt}}}
\tikzset{nero/.style={decorate,draw=black}}
\tikzset{bianco/.style={decorate,draw=bg}}
 \tikzset{ every node/.style={inner sep=0pt,minimum size=1mm},
  nsnode/.style={draw,circle,black,minimum size=2mm},
  asnode/.style={draw,circle,myblue,fill=myblue},
  bsnode/.style={draw,circle,black,fill=black,nero},
  csnode/.style={draw,circle,red,fill=red, minimum size=2mm},
  every fit/.style={inner sep=-1.5pt,text width=1cm}  }
\def\thm@space@setup{%
  \thm@preskip=\parskip \thm@postskip=0pt
}
\newcommand{\leqnomode}{\tagsleft@true}
\newcommand{\reqnomode}{\tagsleft@false}
\definecolor{MyDarkBlue}{rgb}{0,0.08,0.45}
\definecolor{cites}{HTML}{324b13}
\definecolor{links}{HTML}{1a663b}
\definecolor{MyLightBlue}{cmyk}{0.1,0.8,0,0.1}
\newtheoremstyle{ex}
{1pt}
{1pt}
{}
{}
{\bfseries}
{.}
{.5em}
{}%
\newtheorem{Example}{Example}
\newtheorem{Theorem}{Theorem}
\newtheorem*{Theorem*}{Theorem}
\newtheorem{Definition}{Definition}
\newtheorem{Remark}{Remark}
\newtheorem{Corollary}{Corollary}
\newtheorem{lemma}{Lemma}
\newtheorem{Axiom}{Axiom}
\newtheorem{Assumption}{Assumption}
\newtheorem{Rule}{Rule}
\newtheorem*{axiom6restated}{\autoref{ax:axiom6}}
\title{Causality: A decision-theoretic foundation\footnote{The latest version of this paper can be found \href{https://www.dropbox.com/scl/fi/wx9b8f8yqodoiq5x1gzkq/Causality-Public.pdf?rlkey=puzpv290f5icrzivjooxxvqk8&dl=0}{here}.I wish to thank David Ahn, Arjada Bardhi, Jeff Ely, Simone Galperti, Bart Lipman, Marciano Siniscalchi, and Tristan Tomala for insightful discussions on the paper.}}
\author{Pablo Schenone\thanks{Department of Economics, Fordham University. E-mail: \href{mailto:pschenone@fordham.edu}{\texttt{pschenone@fordham.edu}}}}
\begin{document}


\begin{titlepage}

\maketitle
\begin{abstract}
We propose a decision theoretic framework that allows a decision maker to express its causal model of the world. We extend the model of \cite{savage1972foundations} by allowing the decision maker (DM) to choose policy interventions prior to choosing acts over the non-intervened variables. We define what it means for the DM's choices to express the DM’s belief that the relation between some variables is causal. We provide axioms characterizing when the DM’s causal model, as expressed through the DM’s choices, is represented as a directed acyclic graph. A final axiom characterizes when the DM’s causal model has a representation like the one in \cite{pearl1995causal}. Consequently, under this additional axiom one can apply Pearl’s results to identify the DM’s causal model from the DM’s probabilistic model. 
\end{abstract}

\textsc{JEL classification codes:} D80, D81\\
Keywords: \emph{causality, decision theory, subjective expected utility, axioms, representation theorem, intervention preferences}

\end{titlepage}

\section{Introduction}

\indent An economist (Alex) studies the relationship between intellectual ability, education level, and lifetime earnings in a population. A policy maker approaches Alex for her opinion on how education and earnings are related, and whether a policy that makes college degrees compulsory would increase earnings. He is perplexed at Alex's answer:
\begin{enumerate}[label=(\roman*)]
\item If a person has a college degree they are more likely to have high earnings than if they only had a high school diploma, \label{item1}
\item however, forcing high school graduates to obtain a college degree will not increase their earnings.\label{item2}
\end{enumerate}

\indent The statements above reflect the adage ``correlation is not causation''. The first statement refers to Alex's joint beliefs over education and earnings--often modeled via probability distributions. It is a statement about the ``correlation'' (or \emph{probabilistic (in)dependence}) of the relevant variables.\footnote{Because correlation is used for both general statistical dependence and \emph{linear} statistical dependence, moving forward we use probabilistic (in)dependence to avoid confusion.} The second statement is a statement about \emph{policy relevance}: will an education policy change Alex's beliefs about potential earnings? Extrapolating policy relevance from probabilistic dependence is how economics generally understands ``causality''. 

\indent In this paper, we propose a decision-theoretic framework in which to precisely define and study causality. In the classic \cite{savage1972foundations} framework, statement \ref{item1} is defined through choices over bets defined on the education and earnings outcomes. However, the classic Savage model is not rich enough to allow for the behavior that defines statement \ref{item2}. Because comparing behavior consistent with statements \ref{item1} and \ref{item2} is how we understand causality, we first extend the Savage framework to accommodate the behavior that defines statements like \ref{item2}. In the extended framework, we provide a precise definition of causality--which we refer to as Bayesian causality--and prove a representation result. The representation result has a graph-theoretic component, which we connect to the existing literature on causal inference (see, e.g., \citealp{pearl1995causal}). Finally, we provide conditions that are necessary and sufficient for causality to be expressible purely in terms of probabilistic (in)dependence. For empirical economics, the graphical component--which \autoref{th:theorem1} axiomatizes--provides an efficient language with which to convey the causal assumptions the researcher makes. Furthermore, \autoref{th:newtheorem2} shows that causal statements may indeed be expressed purely in terms of probabilistic independence, which is useful for inferring causal relations in non-experimental settings. Below, we provide an informal overview of the model, the results, and our contributions.


\paragraph{Extending the Savage framework} Following \cite{savage1972foundations}, statement \ref{item1} above can be defined via Alex's betting behavior. Suppose Alex is offered the following bets, the payoff of which depend on the education ($E$) and lifetime earnings ($L$) of a representative citizen, Mr Kane:
\begin{figure}[H]
\centering
\subfloat[High lifetime earnings if high school]{
\begin{tabular}{c|cc}
&$L< \$ 3M$&$L\geq \$ 3M$\\
\hline
$E\neq$ High school, $a\in A$&$\$0$&$\$0$\\
$E=$ High school, $a\in A$&$\$1$&$\$0$
\end{tabular}\label{fig:hslow}}\subfloat[Low lifetime earnings if high school]{
\begin{tabular}{c|cc}
&$L< \$ 3M$&$L\geq \$ 3M$\\
\hline
$E\neq$ High school, $a\in A$&$\$0$&$\$0$\\
$E=$ High school, $a\in A$&$\$0$&$\$1$
\end{tabular}\label{fig:hshigh}}
\caption{Two conditional bets: high vs low lifetime earnings ($L$), conditional on high school education ($E$).}\label{table:1}
\end{figure}
\begin{figure}[H]
\centering
\subfloat[High lifetime earnings if college]{
\begin{tabular}{c|cc}
&$L< \$ 3M$&$L\geq \$ 3M$\\
\hline
$E\neq$ College, $a\in A$&$\$0$&$\$0$\\
$E=$ College, $a\in A$&$\$1$&$\$0$
\end{tabular}\label{fig:clow}}\hfill
\subfloat[Low lifetime earnings if college]{
\begin{tabular}{c|cc}
&$L< \$ 3M$&$L\geq \$ 3M$\\
\hline
$E\neq$ College, $a\in A$&$\$0$&$\$0$\\
$E=$ College, $a\in A$&$\$0$&$\$1$
\end{tabular}\label{fig:chigh}}
\caption{Two conditional bets: high vs low lifetime earnings ($L$), conditional on college education ($E$).}\label{table:2}
\end{figure}

\indent  \autoref{table:1} shows two bets on Mr Kane’s lifetime earnings \emph{conditional} on Mr Kane’s education. Both pay \$0 if Mr. Kane's highest degree is not high school. Instead, if Mr. Kane's highest education is high school, the act in panel \ref{fig:hslow} pays \$$1$ if Mr. Kane's earnings are low (less than $3M$) and \$$0$ otherwise, while the act in panel \ref{fig:hshigh} pays \$$1$ if earnings are high (more than $3M$) and \$$0$ otherwise. Suppose Alex chooses the low earnings bet over the high earnings bet. Alex's choice reveals that, conditional on a high school education, Alex believes earnings are more likely to be low than they are to be high. Similarly, \autoref{table:2} shows bets on Mr. Kane's lifetime earnings conditional on Mr. Kane having a college degree. Suppose Alex chooses the high earnings bet over the low earnings bet. Alex's choice reveals that, conditional on a college degree, Alex believes lifetime earnings are more likely to be high than they are to be low. Jointly, Alex's choices imply statement \ref{item1}. If we represent Alex's beliefs via probability distributions, we say that Alex behaves as if education and earnings are \emph{probabilistically dependent}.

\indent Next, we extend the classic \cite{savage1972foundations} framework so that statement \ref{item2} may also be defined in terms of Alex's choice behavior. In the Savage model, education levels are realized outside of Alex's control. In contrast, education policies are tools that influence what education levels are realized. Therefore, the Savage model has no language with which to make claims such as \ref{item2}. 

\indent We enrich Alex's choice domain so that Alex chooses both bets and \emph{policy interventions}. A policy intervention describes both which variables Alex will influence and the levels to which these variables will be set. For example, an education policy, $E=$ college, is a policy that forces education to take the value ``college'' but leaves ability and earnings unaffected. Policy interventions could be literal--the government forces Mr Kane to get a college degree, after which Alex's beliefs over the remaining variables can be elicited--or a counterfactual analysis--``if Alex were to impose a college education on Mr. Kane, how would that impact Alex's choice over bets defined on Mr. Kane's earnings?''.

\indent We now have language to express statement \ref{item2}. Suppose that for any two education policies, $e$ and $e^\prime$, and any two bets on earnings, $f$ and $g$, Alex chooses the pair $(e,f)$ over the pair $(e,g)$ if and only if Alex chooses the pair $(e^\prime,f)$ to $(e^\prime,g)$. This reveals that Alex's beliefs about earnings are insensitive to education policies. As statement \ref{item2} claims, education policies have no effect on Alex's beliefs about earnings. We refer to this as \emph{intervention independence}. 

\paragraph{Defining causality: pay for policies, or pay for conditional bets?} In the running example, before placing a bet on lifetime earnings, Alex is willing to pay a small fee to make the bets on lifetime earnings conditional on the realized education levels; however, Alex is not willing to pay for a policy that determines Mr. Kane’s education level before betting on lifetime earnings. Alex's behavior may be rationalized as follows. Alex believes the following: (a) ability is inherently informative about education, (b) ability is inherently informative about earnings, and (c) any information education contains about earnings is contained in the information ability conveys about earnings--in probability language, education and earnings are independent conditional on ability. This rationalizes why Alex is willing to pay to make bets on earnings contingent on education levels. While Alex actually values making the earnings bets contingent on ability, making the earnings bet contingent on education is still valuable because education is informative about ability, which in turn is informative about earnings. Furthermore, (a)-(c) also rationalizes Alex’s unwillingness to pay for a policy that determines Mr Kane’s education level before betting on lifetime earnings. A policy that directly chooses the education levels for everyone destroys the information education provides about ability: everyone being forced to get the same education does not mean that everyone's realized ability will be the same. Therefore, after the policy intervention, education is no longer informative about earnings because education is no longer informative about ability.

\indent The above discussion emphasizes a key difference between probabilistic and intervention independence: while probabilistic independence is symmetric, intervention independence is not. Indeed, if higher education is associated with higher ability, then higher ability must be associated to higher education. If Alex has to bet on ability, Alex would pay a small enough fee to make this bet contingent on the realized education. Symmetrically, if Alex had to bet on education, Alex should pay a small enough fee to make the bet contingent on the realized ability. However, this symmetry need not hold for how Alex's beliefs react to policy interventions: Alex might believe a policy that fixes the ability levels is inherently informative about education--after all, Alex believes that higher ability people tend to educate themselves more--whereas a policy on education is uninformative about ability levels--everyone being forced to get the same education does not mean that everyone's realized ability will be the same. 


\indent The discussion above motivates our definition of causality, which we denote \emph{Bayesian causality} (see \autoref{def:causality} in \autoref{sec:defcausal}). A variable (e.g., education) is a Bayesian cause of another variable (e.g., lifetime earnings) if a \emph{ceteris paribus} policy intervention on the presumed cause affects beliefs about the presumed consequence. Specifically, education is a \emph{Bayesian cause of} earnings if a policy that exogenously changes education levels while keeping ability constant affects Alex's beliefs over earnings. Crucially, Bayesian causality makes use of two important terms: ceteris paribus, and policy interventions. Regarding the first, we must ensure that Alex's beliefs over earnings are sensitive to education policies \emph{for each fixed ability level}. Keeping ability fixed ensures that further information on education is valuable only if education is intrinsically informative about earnings. Regarding the second, we define Bayesian causality via policy interventions rather than contingent bets because breaking the symmetry in information content is essential to differentiate \emph{cause} from \emph{consequence}.

\paragraph{Representations} Having extended the Savage model to accommodate choices over policy interventions and bets, and having defined Bayesian causality, we provide a representation of Alex's preferences over policy interventions and bets, from which we obtain a quantitative model of Bayesian causality. We proceed in two steps.

\indent First, we define what it means for a directed acyclic graph (DAG) to represent Alex's preferences over policy interventions and bets in \autoref{sec:representations}. We assume that preferences in the extended Savage framework admit a subjective expected utility representation, where Alex's subjective beliefs are represented via probability distributions.\footnote{This assumption is guaranteed by adapting standard axioms from expected utility theory.} Following the literature on graphical representations of independence, we may represent the probabilistic independence properties of the probabilities that represent Alex's subjective beliefs with either a directed or an \emph{undirected} graph (see \citealp{geiger1990identifying,lauritzen1990independence}). Because probabilistic independence is symmetric, the directed aspect of the graph is unnecessary for this aspect of the representation. Instead, as discussed in our definition of causality, intervention independence is not symmetric. The direction of the arrows is what distinguishes probabilistic independence from intervention independence. A DAG represents Alex's preferences if it represents the probabilistic and intervention independence properties of Alex's expected utility probabilities. \autoref{def:representation2} formally defines when a graph represents Alex's preference in the extended Savage framework.
%

\indent Second, \autoref{th:theorem1}  in \autoref{sec:results} connects the DAG representation with Bayesian causality. Under axioms \ref{ax:axiom3} through \ref{ax:axiom2} in \autoref{sec:axioms}, Bayesian causality imposes restrictions on how conditional independence and intervention independence are related. \autoref{th:theorem1} leverages these connections and proves the following result. In any graph that represents the expected utility probability distributions, an arrow points from a variable $i$ to a variable $j$ if and only if $i$ is a Bayesian cause of $j$. Consequently, we obtain a representation of Bayesian causality both in terms of expected utility and DAGs. That the above results holds for any representing DAG implies Bayesian causality is the only definition of causality representable by a DAG.

\paragraph{Expressing causation purely in terms of conditional independence} Lastly, \autoref{th:newtheorem2} in  \autoref{sec:statistics} shows that an \emph{additional axiom}, \autoref{ax:axiom6}, is necessary and sufficient for intervention independence to be expressible purely in terms of probabilistic dependence. Thus, under all our axioms, we need only observe Alex's behavior in the standard Savage framework to have a complete understanding of both Alex's probabilistic beliefs and her causal model. Including the choice of policy interventions is needed to define Bayesian causality, to distinguish it from probabilistic (in)dependence, and to formalize the axioms required to represent Bayesian causality. Once we write this model and assume all our axioms, Theorem \ref{th:newtheorem2} implies that we may forget these details because the traditional Savage framework summarizes all relevant information needed to express Bayesian causality. This result is of special use for empirical research, as it shows how we can calculate causal effects from conditional independence. 

\indent DAGs as a language to express causal models have been extensively used in the causal inference literature, and have recently been used in economic theory models to model misspecified beliefs. We now explore how our paper connects to these two literatures.

\paragraph{DAGs in causal inference} The paper contributes to the statistical literature on causal inference,  which often uses DAGs as a language with which to encode causal information (see, e.g., \citealp{pearl1995causal,spirtes2000causation}). 

\indent The causal inference literature generally proceeds in two steps. First, the researcher makes assumptions on how the variables of interest causally interact with each other. The exact definition of causality is left up to the researcher, with the implicit understanding it must be representable in a DAG. As Pearl states, ``\emph{The first step in this analysis is to construct a causal diagram (...) which represents the investigator's understanding of the major causal influences among measurable quantities in the domain}''. Next, some restrictions are imposed on the probability distributions of interest.  The objective of these restrictions is to connect the arrows in the DAG (which are assumed to have ``causal'' connotation, whichever way the researcher defines ``causal'') to the probabilistic independence properties implied by the DAG. The question of interest is whether the information the arrows encode--i.e., ``causal effects''--can be recovered from the (observable) probability distributions. As Pearl writes, \emph{``The purpose of the paper is not to validate or repudiate such domain-specific assumptions but, rather, to test whether a given set of assumptions is sufficient for quantifying causal effects from non-experimental data''}. 

\indent Our results contribute to this literature in three ways. First, \autoref{th:theorem1} shows that Bayesian causality is indeed one definition of causality that can be represented by a DAG, but only if our axioms hold. This provides a concrete definition of causality that is compatible with a DAG representation, something that is left out of the causal inference literature. Second, within our extended Savage framework, \autoref{th:theorem1} shows that Bayesian causality is the only definition that can be represented by a DAG. In our extended Savage framework, this implies that using DAGs to encode ``causality'' is equivalent to defining ``causality'' as ``Bayesian causality''. Lastly, \autoref{th:newtheorem2} shows that intervention independence--the key component in Bayesian causality--is expressible purely in terms of probability distributions. This is analogous to the exercise carried out in the causal inference literature. However, rather than imposing restrictions on the probability distributions themselves, we show the restrictions on choice behavior that are both necessary and sufficient to recover Bayesian causality in terms of conditional independence.

\paragraph{DAGs in economic theory} In economic theory, \cite{spiegler2016bayesian,spiegler2017data,spiegler2018can} introduces the language of DAGs to study the effects on decision making of misspecified causal models. Similar to the aforementioned causal inference literature, these papers take as given that arrows in the DAG point from cause to effect, without defining what it means for the decision maker to believe that one variable is a cause of another. Our paper complements this body of work by providing a microfoundation, in terms of a decision maker’s choice behavior, for the use of DAGs to represent a decision maker’s causal model. By pinning down the unique definition of causality that is consistent with a DAG representation–Bayesian causality–our results make transparent the definition of causality the analyst ascribes to when using DAGs to represent a decision maker’s causal model.

\indent In decision theory, \cite{karni2006subjective,karni2017states} explore models in which a decision maker can affect the states that are realized. In those papers, the primitive objects are a set of actions and a set of consequences. States of nature are defined as mappings from actions that a decision maker might take to consequences that arise from those actions; that the mapping from actions to outcomes is stochastic reflects that states are stochastically realized. A decision maker can affect the states that occur by making an appropriate choice of action. This idea is similar to our idea of a policy intervention, because an intervention policy is an action that the decision maker takes that affects the realization of states. However, the focus of Karni's paper is not on using these ideas to discuss causal effects or understand what types of models reflect normative definitions of causality. Rather, Karni focuses on obtaining subjective expected utility representations in the absence of an objectively given state space.

\indent The rest of the paper is organized as follows. \autoref{sec:dt framework} outlines the model and our extension of the Savage framework, and \autoref{sec:defcausal} formally defines Bayesian causality. Sections \ref{sec:representations} and \ref{sec:axioms} present our representations and axioms, respectively. Finally, \autoref{sec:results} contains our two representation theorems. All proofs are in the appendix.

\section{Model}\label{sec:dt framework}
In this section, we introduce our model. \autoref{sec:notation} summarizes our notational conventions, and \autoref{sec:savage-expansion} shows our extension of the Savage framework to allow for the choice of policy interventions. 

\subsection{Notation}\label{sec:notation}
\paragraph{General Notation} We use the following notation throughout the paper. The set $\pazocal{N}=\{1,...,N\}$ is a fixed set of indices. For each $\pazocal{J}\subset\pazocal{N}$, let $\{X_j:j\in\pazocal{J}\}$ be a family of sets indexed by $\pazocal{J}$. We denote by $X_{\pazocal{J}}=\Pi_{j\in\pazocal{J}}X_j$ the Cartesian product of the family and by $x_\pazocal{J}=(x_j)_{j\in\pazocal{J}}$ a typical element in $X_{\pazocal{J}}$; furthermore, we use $X$ to denote $X_\pazocal{N}$. Moreover, all complements of the indexing set are taken with respect to $\pazocal{N}$: if $\pazocal{J}\subset\pazocal{N}$, then $\pazocal{J}^\complement\equiv\pazocal{N}\setminus \pazocal{J}$. Given any set $\pazocal{J}\subset\pazocal{N}$ and any $x\in X$, we use $x_{-\pazocal{J}}=x_{\pazocal{J}^\complement}$ and $X_{-\pazocal{J}}=X_{\pazocal{J}^\complement}$. 

\indent If $E$ is a finite set, we use $\Delta(E)$ to denote the probability distributions over $E$. Finally, if $\pazocal{J}\subset \pazocal{N}$ and $E\subset X_{\pazocal{J}}$, then $\mathbbm{1}_{E}:X_{\pazocal{J}}\rightarrow \{0,1\}$ denotes the indicator function that event $E$ has occurred; that is, $\mathbbm{1}_{E}(x_\pazocal{J})=1\Leftrightarrow x_\pazocal{J}\in E$.

\paragraph{Graph-theoretic notation} A directed graph is a pair $(V,E)$ such that $V$ is the (finite) set of nodes and $E\subset V\times V$ is the set of edges. If two nodes, $i$ and $j$, satisfy that $(i,j)\in E$, we write $i\rightarrow j$. The set of \emph{parents} of a node $v\in V$ is denoted $Pa(v)=\{v'\in V: (v^\prime,v)\in E\}$. A node $v\in V$ is a descendant of a node $v'\in V$ whenever a directed path exists from $v'$ to $v$. Formally, if a sequence $(v_1,...,v_T)\in V^T$ exists such that $v_1=v^\prime$, $v_t$ is a parent of $v_{t+1}$ for each $t\in\{1,...,T-1\}$, and $v_T=v$. Analogously, $v^\prime$ is an ancestor of $v$ whenever $v$ is a descendant of $v^\prime$. We denote by $D(v)$  and $ND(v)$ the set of descendants and non-descendants of $v$, respectively. Finally, a directed graph is a directed \emph{acyclic} graph (DAG) if and only if for all $v\in V$, $v$ is not a descendant of $v$. 

\subsection{Expanding the Savage framework with intervention preferences}\label{sec:savage-expansion}

\paragraph{The standard Savage problem} In the standard Savage problem a decision maker (DM) chooses between acts, which are functions mapping states of the world into monetary outcomes. When choosing an act, the DM treats the state of the world as a random variable chosen by \emph{nature}: the DM chooses an act, $f$, nature selects a state of the world, $s$, and the DM receives the payment $f(s)$. Importantly, the DM cannot control nature's choice of the state.

\indent In the context of our paper, the Savage framework is characterized by states $S=\Pi_{i=1}^NX_i\equiv X$, acts $\pazocal{F}=\mathbb{R}^{X}$, and preferences $\succ$ defined over $\pazocal{F}$. We refer to each $i\in\pazocal{N}$ as a \emph{variable}, and we assume each $X_i$ is finite. We assume a finite state space for technical simplicity because causality is unrelated to whether state spaces are finite or infinite. 

\indent Throughout the paper we apply the following simplifying notation. Because $\succ$ is defined on $\mathbb{R}^X$, acts on both side of a $\succ$ comparison should be defined on $\mathbb{R}^X$, but most of the time we care about comparing acts that only differ on a subset of variables. Suppose $f$ and $g$ are acts defined on some subset of variables, $\pazocal{J}\subsetneq \pazocal{N}$, so $f,g\in\mathbb{R}^{X_{\pazocal{J}}}$. To compare $f$ and $g$ using $\succ$ we need to extend $f$ and $g$ to the full state space, $X=(\pazocal{N}\setminus\pazocal{J})\cup\pazocal{J}$. We extend $f$ and $g$ to the full state space multiplying by a function that is constantly $1$ for any realization of $\pazocal{N}\setminus\pazocal{J}$, so we can compare $\mathbb{1}_{X_{\pazocal{N}\setminus\pazocal{J}}}f$ with $\mathbb{1}_{X_{\pazocal{N}\setminus\pazocal{J}}}g$ using $\succ$: these two acts are defined over the entire state space but they only depend non-trivially on $\pazocal{J}$ as indicated by $f$ and $g$. To avoid this cumbersome notation we simplify notation and simply write $f\succ g$ to indicate $\mathbb{1}_{X_{\pazocal{N}\setminus\pazocal{J}}}f \succ \mathbb{1}_{X_{\pazocal{N}\setminus\pazocal{J}}}g$ whenever confusion is unlikely.

\paragraph{Policy Interventions} To model causality we extend the Savage framework by incorporating \emph{policy interventions}. In contrast with the standard Savage model, policy interventions are a tool with which the DM can control the state's realization. Formally, the set of \emph{policy interventions} is the set $\pazocal{P}=\Pi_{i=1}^N(X_i\cup\{\emptyset\})$, interpreted as follows. For a policy $p\in\pazocal{P}$, and a variable $i\in\pazocal{N}$, suppose that $p_i=\emptyset$. Then, this policy leaves variable $i$ unaffected; that is, $i$ is determined by nature. In contrast, suppose that for some other variable, $j\in\pazocal{N}$, $p_j=x_j\in X_j$. Then policy $j$ forces variable $j$ to take the value $x_j$; that is, the value of variable $j$ is directly chosen by the DM and not by nature as it was in the Savage model. We then say that $j$ is an \emph{intervened variable}. Therefore, each policy determines a collection of \emph{interventions} on the state space.

\paragraph{Choice Domain} Our DM chooses a policy intervention and a Savage act over the non-intervened variables. Concretely, let $p\in\pazocal{P}$ be any policy, and $\pazocal{N}(p)=\left\{i\in\pazocal{N}: p_i=\emptyset\right\}$ be the set of non-intervened variables. In contrast with the standard Savage framework, where acts are defined on $\mathbb{R}^X$, our DM chooses over acts defined over the non-intervened variables: $\pazocal{F}(p)\equiv \mathbb{R}^{X_{\pazocal{N}(p)}}$. The DM's primitive choice domain is the set $\left\{(p,f): p\in\pazocal{P}, f\in\pazocal{F}(p)\right\}$. We endow this DM with a preference relation, $\overline{\succ}$, on $\left\{(p,f): p\in\pazocal{P}, f\in\pazocal{F}(p)\right\}$. 

\paragraph{Two types of induced decision problems: conditioning and intervening} Because our DM chooses both interventions and standard Savage acts, there are two types of induced decision problems they may face. 
For example, consider acts $f$ and $g$ in \autoref{fig:conditional}:
\begin{figure}[h!]
\centering
\subfloat[Act $f$]{
\begin{tabular}{c|cc}
&$L< \$ 3M$&$L\geq \$ 3M$\\
\hline
$E\neq$ College, $a\in A$ &\begin{tabular}{c}
\textcolor{black}{$0$}
\end{tabular}&\begin{tabular}{c}
\textcolor{black}{$0$}
\end{tabular}\\
$E=$ College, $a\in A$&\begin{tabular}{c}
\textcolor{black}{$1$}
\end{tabular}&\begin{tabular}{c}
\textcolor{black}{$0$}\end{tabular}
\end{tabular}
}
\subfloat[Act $g$]{\begin{tabular}{c|cc}
&$L < \$ 3M$&$L\geq \$ 3M$\\
\hline
$E\neq$ College, $a\in A$ &\begin{tabular}{c}
$0$
\end{tabular}&\begin{tabular}{c}
$0$
\end{tabular}\\
$E=$ College, $a\in A$&\begin{tabular}{c}
$0$
\end{tabular}&\begin{tabular}{c}
$1$
\end{tabular}
\end{tabular}
}
\caption{Illustrating conditional preferences}\label{fig:conditional}
\end{figure}\newpage

\indent Acts $f$ and $g$ pay out as a function of citizen Kane's lifetime earnings, but they only pay out if nature selects $E=$ college. This is an example of a Savage conditional act (in this case, conditional on education levels), from which we could define the DM's conditional beliefs (in this case, beliefs on lifetime earnings conditional on education levels). Importantly, comparing $f$ and $g$ is done when no variables are intervened (equivalently, after selecting the policy intervention $p=(\emptyset, \emptyset, \emptyset)$).

\indent A different induced decision problem is to first intervene citizen Kane's education level to $E=$ college, and then choose amongst acts over Kane's realized lifetime earnings like $f$ and $g$ above. Formally, the DM chooses policy intervention $=p\equiv (p_A, p_E, p_L)=(\emptyset, College, \emptyset )$ after which they compare $(p,f')$ and $(p,g')$ in \autoref{fig:intervention} using preference $\overline{\succ}$:

\begin{figure}[h!]
\centering
\subfloat[Act $f^\prime$]{
\begin{tabular}{c|cc}
&$L< \$ 3M$&$L\geq \$ 3M$\\
\hline
$a\in A$ &\begin{tabular}{c}
\textcolor{black}{$1$}
\end{tabular}&\begin{tabular}{c}
\textcolor{black}{$0$}
\end{tabular}\\
\end{tabular}\label{fig:actfp}}
\subfloat[Act $g^\prime$]{
\begin{tabular}{c|cc}
&$L < \$ 3M$&$L\geq \$ 3M$\\
\hline
$a\in A$ &\begin{tabular}{c}
$0$
\end{tabular}&\begin{tabular}{c}
$1$
\end{tabular}\\
\end{tabular}\label{fig:actgp}}
\caption{Illustrating intervention independence}\label{fig:intervention}
\end{figure}

\indent This is an example of an intervention preference, from which we define intervention beliefs: for fixed a policy interventions, we compare different acts on the non-intervened variables. 

\indent Comparing intervention beliefs and conditional beliefs is at the heart of Bayesian causality. We conclude this section by formally defining these concepts formally and comparing their properties.

\paragraph{Conditional Preferences and Conditional Beliefs} We now formally define \emph{Savage conditional acts}, from which \emph{conditional preferences} and \emph{conditional beliefs} are defined. These definitions are adaptations of classical definitions in \cite{savage1972foundations} to our setup. Consider a variable, $i$, an act over $X_i$, $f\in\mathbb{R}^{X_i}$, a set of variables, $\pazocal{J}\subset\pazocal{N}_{-\{i\}}$, and a realization $x_\pazocal{J}\in X_\pazocal{J}$. The act $\mathbb{1}_{x_\pazocal{J}}f\in\mathbb{R}^{X_{\pazocal{J}\cup\{i\}}}$ is the act that pays according to $f$ if $x_\pazocal{J}$ is realized but pays \$0 otherwise; we refer to this as a Savage conditional act. From here, we define the DM's preference on $\mathbb{R}^{X_i}$ conditional on $x_{\pazocal{J}}$ being realized: this is the restriction of $\succ$ to the sub-space of acts $\{\mathbb{1}_{x_\pazocal{J}}f:f\in\mathbb{R}^{X_i} \}\subset\mathbb{R}^{X_{\pazocal{J}\cup\{i\}}}$. We refer to this as a conditional preference.

\indent Conditional acts and conditional preferences are key in defining conditional beliefs (and, later on, conditional probability). Given a variable $i\in\pazocal{N}$, and two events $E,F\subset X_{i}$, we say that $E$ is (unconditionally) more likely than $F$  if the DM prefers an act that pays \$1 in $E$ and \$0 otherwise to an act that pays \$1 in $F$ and \$0 otherwise. Formally, if $\mathbb{1}_E\succ \mathbb{1}_F$. Analogously, for a set of variables, $\pazocal{J}\subset\pazocal{N}_{-\{i\}}$, and an event $J\subset X_{\pazocal{J}}$ we can say that $E$ is more likely than $F$ conditional on $J$ if the DM conditionally prefers the act $\mathbb{1}_E$ to the act $\mathbb{1}_F$. Formally, if $\mathbb{1}_J\mathbb{1}_E\succ \mathbb{1}_J\mathbb{1}_F$. The DM's conditional beliefs are thus expressions of the DM's conditional preferences, obtained by pre-multiplying the relevant acts by the indicator function of the relevant conditioning event. 

\paragraph{Intervention Preferences and Intervention Beliefs} As with conditional preferences and conditional beliefs, interventions generate the analogous notions of \emph{intervention preferences} from which \emph{intervention beliefs} are derived.

\indent  Given $\overline{\succ}$, each $p$ induces an \emph{intervention preference} on $\pazocal{F}(p)$: for each $p\in\pazocal{P}$ and each $f,g\in\pazocal{F}(p)$, we say that $f\succ_pg$ if, and only if, $(p,f)\overline{\succ} (p,g)$. Because our axioms focus on the DM's intervention preferences, it is convenient to express intervention preferences explicitly in terms of the values at which the DM intervenes on the variables. For each policy $p\in\pazocal{P}$, if $p_{-\pazocal{N}(p)}=x_{-\pazocal{N}(p)}$, we use $\succ_{x_{-\pazocal{N}(p)}}$ to denote $\succ_{p_{-\pazocal{N}(p)}}$. In particular, the DM is allowed to choose a policy that does not intervene any variable; that is, $p=(\emptyset,...,\emptyset)$. Then, $\pazocal{F}(\emptyset,...,\emptyset)=\mathbb{R}^X$, and $\succ_{(\emptyset,...,\emptyset)}$ embed the standard Savage framework in our expanded framework. To simplify notation, we write $\succ_{(\emptyset,...,\emptyset)}\equiv \succ$.

\indent From intervention preferences, we obtain \emph{intervention beliefs}. For each policy $p\in\pazocal{P}$, let $E,F\subset X^{\pazocal{N}(p)}$ be two events in the space of non-intervened variables. We say $E$ is more likely than $F$ after intervention $p$ if an act that pays \$$1$ when $E$ happens and \$$0$ otherwise is $\succ_p$-preferred than an act that pays \$$1$ when $F$ happens and \$$0$ otherwise. Formally, if $\mathbb{1}_{E}\succ_p \mathbb{1}_F$. Intervention beliefs are analogous to conditional beliefs except that the DM is intervening a subset of variables, rather than conditioning acts upon nature selecting a specific realization for those variables.

\indent For each $p\in\pazocal{P}$, we say that $\succ_p$ has a \emph{probabilistic representation} if there is a probability distribution $\mu_p$ on $X_{\pazocal{N}(p)}$ such that for all $E,F\subset X_{\pazocal{N}(p)}$, $\mu_p(E)>\mu_p(F)$ if, and only if, $\mathbbm{1}_E\succ_p\mathbbm{1}_{F}$. When such a representation exists, we say that $\mu_{p}$ is an \emph{intervention probability}. 

\indent Intervention preferences resemble Savage conditional preferences but have important differences. Savage conditional preferences capture betting behavior conditional on the DM observing that a certain event was realized, whereas intervention preferences capture betting behavior after intervening the relevant variables. 

\section{Definition of causality}\label{sec:defcausal}
 In this section, we introduce the definition of Bayesian causality, which formalizes the intuitive definition given in the introduction. 

 Bayesian causality (\autoref{def:causality}) is defined as a lack of intervention independence, so we begin by defining intervention independence. Informally, intervention independence means the value of information contained in an intervention is zero. Consider a set of variables, $\pazocal{K}$, and two other variables, $i,j\notin\pazocal{K}$. Suppose the DM already intervened the variables in $\pazocal{K}$ and now has to choose between two acts, $f,g\in\mathbb{R}^{X_i}$. Because $f$ and $g$ pay out as a function of $i$'s realization, the DM values information about $i$. In addition to the intervention of $\pazocal{K}$, the DM is offered the chance to also intervene variable $j$ at a small fee $\varepsilon>0$. If the choice of acts over $i$ is insensitive to further interventions on $j$, then intervening $j$ is not worth any $\varepsilon>0$ fee. We then say that $i$ is $\pazocal{K}$-intervention independent of $j$.

\begin{Definition}[Intervention independence]\label{def:independence}
We say that variable $i$ is $\pazocal{K}$-\emph{intervention independent of variable $j$} if $\{i,j\}\cap\pazocal{K}=\emptyset$ and for all $f,g\in\mathbb{R}^{X_i}$ and $x_j\in X_j$,
\begin{eqnarray*}
f\succ_{x_{j},x_{\pazocal{K}}}g &\Leftrightarrow& f\succ_{x_{\pazocal{K}}} g,
\end{eqnarray*}
\end{Definition}

\begin{Remark}
\autoref{def:independence} also implies the following condition holds for all $x_j,x_j^\prime\in X_j$
\begin{eqnarray*}
f\succ_{x_{j},x_{\pazocal{K}}} g &\Leftrightarrow&  f\succ_{x_{j}',x_{\pazocal{K}}}g. 
\end{eqnarray*}
Indeed, for any two values of $j$, $x_j$ and $x'_{j}$, \autoref{def:independence} implies
\begin{eqnarray*}
f\succ_{x_{j},x_{\pazocal{K}}} g \Leftrightarrow f\succ_{x_{\pazocal{K}}} g \Leftrightarrow f\succ_{x'_{j},x_{\pazocal{K}}} g.
\end{eqnarray*}
The above condition states that if the DM sees no value in intervening $j$ after $\pazocal{K}$ is intervened, then the DM's intervention preferences, $\succ_{x_{\pazocal{K} \cup \{j\}} }$, should not change with the value of $x_j$.
\end{Remark}

\indent Having defined intervention independence, we can formally define Bayesian causality:

\begin{Definition}[Bayesian causality]\label{def:causality}
For all $i,j\in\pazocal{N}$, we say that variable $j$ \emph{Bayesian causes} variable $i$ if $i$ is not $\{i,j\}^\complement$-independent of $j$. 
\end{Definition}

 To illustrate the role of  \autoref{def:independence} in defining Bayesian causality, consider a DM (say, Blake) with the following beliefs: $E$ducation is intrinstically informative about $A$bility, and ability is intrinsically informative about $L$ifetime earnings, but education only conveys information about lifetime earnings through the ability variable. If $e,e'\in E$ are two education levels, and $f$ and $g\in \mathbb{R}^{X_L}$ are two acts on lifetime earnings, both $f\succ_{e} g$ and $g\succ_{e'} f$ could be true for Blake. This reversal indicates that $E$ and $L$ are not $\{\emptyset\}$-independent, which is intuitive: interventions on $E$ affect beliefs about $A$, and beliefs about $A$ affect beliefs about $L$. However, this is an effect of $E$ on $L$ that is mediated through $A$. As such, we would not want to use this as a basis to claim that $E$ \emph{causes} $L$--at least not if we want to define causality as the unmediated effect that one variable has on the beliefs about another variable. To define ``$E$ causes $L$'' we instead regard intervention preferences $\succ_{(a, e)}$ as a function of $e$ for each fixed $a\in A$. In other words, we want to ask if $E$ and $L$ are $\{A\}$-independent. This motivates the formal definition of Bayesian causality above.

\paragraph{Bayesian causality notation} We now collect basic notation and nomenclature related to Bayesian causality. We let $Ca(i)=\{j\in\pazocal{N}:\text{$j$ is a Bayesian cause of $i$}\}$ denote the \emph{Bayesian causes of $i$}, and $Co(i)=\{j: \text{$i$ is a Bayesian cause of $j$}\}$ denote the set of \emph{Bayesian consequences of $i$}. Furthermore, we say that $j$ is an indirect Bayesian cause of $i$ if there is a sequence $j_0,...,j_T$ such that, for all $t\in\{0,...,T-1\}$, $j_{t}$ causes $j_{t+1}$, $j_0=j$ and $j_{T}=i$. In Blake's example above, $E$ is not a Bayesian cause of $L$ but it is an indirect Bayesian cause of $L$. We denote the set of indirect Bayesian causes of a variable $i$ by $ICa(i)$, and (analogously) the set of indirect Bayesian consequences by $ICo(i)$. We also use the notation above when referencing subsets of variables: if $\pazocal{J}\subset\pazocal{N}$ is a set of variables, then $Ca(\pazocal{J})=\cup_{j\in\pazocal{J}} Ca(j)$, and analogously for consequences, indirect causes, and indirect consequences. Finally, if a variable $i$ is such that $Ca(i)=\emptyset$, we say that $i$ is an \emph{exogenous primitive}. Indeed, when a DM forms a causal model of the world, the set of primitives of such a model is precisely the set of variables that are not caused by any other variable in the model. 

\section{Representations}\label{sec:representations}
\indent We present four representations for four objects, which together represent the DM's causal model. First, we present the Subjective Expected Utility (SEU) representation of the DM's intervention preferences. This representation summarizes the DM's probabilistic view of the world. Second, given any probability distribution, $\mu\in\Delta(S)$, we present a DAG representation of $\mu$'s conditional independence structure. This representation comes from the conditional independence literature \citep{lauritzen1990independence}. Third, we use this definition to define a DAG representation of the DM's primitive preference, $\overline{\succ}$. Finally, we define the DM's \emph{causal graph} as a graph where arrows point from Bayesian cause to Bayesian consequence.

\paragraph{Expected utility representation of intervention preferences} In this representation, intervention beliefs are represented via a collection of probability distributions, each indexed by the intervention that generates them. 

\indent We say that an intervention preference, $\succ_p$, is a (monotone) subjective expected utility preference if there exists a unique probability distribution, $\mu_p\in\Delta(X_{\pazocal{N}(p)})$, and a (monotone increasing) function $u_p:\mathbb{R}\rightarrow\mathbb{R}$ such that for all acts, $f,g\in \mathbb{R}^{X_{\pazocal{N}(p)}}$,
\begin{eqnarray}\label{eq:EU}
f\succ_{p} g \Leftrightarrow \sum_{x_{\pazocal{N}(p)}\in X_{\pazocal{N}}(p)}u_{p}(f(x_{\pazocal{N}(p)}))\mu_{p}(x_{\pazocal{N}(p)})>\sum_{x_{\pazocal{N}(p)}\in X_{\pazocal{N}}(p)}u_{p}(g(x_{\pazocal{N}(p)}))\mu_{p}(x_{\pazocal{N}(p)}).
\end{eqnarray}

\indent There are many axiomatizations of monotone expected utility preferences that fit the framework of our model, such as \cite{Gul91},  Theorem $3$ in \cite{fishburn1967preference}, and Theorem $3$ in  \cite{karni2006subjective}, among others. Because axiomatizing SEU is not the focus of this paper, we let the readers select their preferred axiomatization. 

\indent We assume the intervention preferences satisfy the following:

\begin{Assumption}\label{ass:ass1}
For each $p,p'\in\pazocal{P}$, the following are true.
\begin{enumerate}[label=(\roman*)]
\item The intervention preference $\succ_{p}$ is a monotone subjective expected utility preference.
\item There are no null states: for all $x\in X$, $\mathbbm{1}_x\succ_p \mathbbm{1}_{X}0$.
\item Policies do not affect preferences: $(\forall x,y\in\mathbb{R})$, $\mathbbm{1}_{X_{\pazocal{N}(p)}}x\succ_p\mathbbm{1}_{X_{\pazocal{N}(p)}}y \Leftrightarrow \mathbbm{1}_{X_{\pazocal{N}(p')}}x\succ_{p'}\mathbbm{1}_{X_{\pazocal{N}(p')}}y$.
\end{enumerate}
\end{Assumption}

\autoref{ass:ass1} guarantees three properties of the DM's probabilistic model. First, each of the DM's intervention preferences admits an SEU representation. Second, for any intervention, each state has positive probability: because the act that pays \$$1$ if $x\in X$ is realized is strictly preferred to the act that pays $0$ under all states, then each $x\in X$ must be realized with positive probability. Lastly, Bernoulli indices are independent of the policy choice: $u_p=u_{p'}$ for all $p,p'\in\pazocal{P}$. This is a natural assumption because Bernoulli indices represent the DM's preference over monetary outcomes, which are naturally independent from policy choice. We call these three properties an assumption, rather than an axiom, because this part of the model deals only with the DM's probabilistic model, not their causal model. 

\paragraph{DAG representation of conditional independence} \cite{lauritzen1990independence} provides a definition for when a DAG represents a given probability distribution, say $\mu\in\Delta(\Pi_{i\in\pazocal{N}}X_i)$. The goal is to graphically represent the conditional independence structure of $\mu$, not to make causal statements. 

\indent We first motivate the DAG representation of conditional independence with the example in \autoref{fig:representation-graph} below:

\begin{figure}[H]
\centering
\scalebox{1}{%
\begin{tikzpicture}[thick]
 nodes
\draw (0,0) node {$a$};
\draw (0,4) node {$b$};
\draw (2,0) node {$j$};
\draw (2,4) node {$w$};
\draw (4,2) node {$i$};
\draw (6,2) node {$k$};
\draw (8,2) node {$z$};
\draw[black, ->] (0,0.5)--(0,3.5);
\draw[black, ->] (0.5,0)--(1.5,0);
\draw[black, ->] (2.5,0)--(3.5,1.5);
\draw[black, ->] (1.5,4)--(0.5,4);
\draw[black, ->] (2.5,4)--(3.5,2.5);
\draw[black, ->] (4.5,2)--(5.5,2);
\draw[black, ->] (6.5,2)--(7.5,2);

\end{tikzpicture}}
\caption{A DAG representing the distribution $\mu(a,b,w,j,i,k,z)=\mu(a)\mu(w)\mu(b\vert w,a)\mu(j\vert a)\mu(i\vert w,j)\mu(k\vert i)\mu(z\vert k)$.}\label{fig:representation-graph}
\end{figure}

 The DAG above represents a distribution, $\mu\in\Delta(\{a,b,w,j,i,z\})$, that can be factorized as $\mu(a,b,w,j,i,k,z)=\mu(a)\mu(w)\mu(b\vert w,a)\mu(j\vert a)\mu(i\vert w,j)\mu(k\vert i)\mu(z\vert k)$. The nodes in the DAG correspond to the variables, and an arrow from a node $v$ to a node $v'$ encodes that $v$ contains information about $v'$ that is not contained in any other variable, $v''$. Thus, the parents of a variable $v$, $Pa(v)$, are the variables that provide \emph{unmediated information} about $v$. For instance, \autoref{fig:representation-graph} represents that $w$ and $j$ contain unmediated information about $i$, and thus that $i$ is never independent of $\{w,j\}$. Similarly, $i$ is never independent of its direct descendant, $k$, because $i$ carries unmediated information about $k$. Now, consider a variable that is an ancestor of $i$; for example, $a$. Clearly, $a$ and $i$ are not independent: $a$ provides unmediated information about $j$, which in turn provides information about $i$. However, any information that $a$ has about $i$ is implicitly encoded in $j$. Indeed, if $a$ carries unmediated information about $i$, there should be an arrow $a\rightarrow i$, but such an arrow is absent. Similarly, $b$ provides information about $i$: $b$ is informative about $\{a, w\}$, both of which are informative about $i$. However, any information that $b$ has about $i$ is encoded in $\{j,w\}$. This implies that once we condition on the parents of $i$ (in this case, $\{w,j\}$), all non-descendants of $i$ are conditionally independent of $i$, because any information about $i$ that a non-descendant might contain is mediated via $\{w,j\}$. 

\indent More generally, let $\mu\in\Delta(\Pi_{i\in\pazocal{N}}X_i)$ be a probability distribution, and let $G=(\{1,...,N\}, E)$ be a DAG. The chain rule implies the following:
\begin{eqnarray}\label{eq:chainrule}
 (\forall x\in \Pi_{i\in\pazocal{N}}X_i),\text{ } \mu(x)=\Pi_{i=1}^N\mu(x_i\vert ND(i)),
\end{eqnarray}
where recall that $ND(i)$ is the set of non-descendants of variable $i$. Because arrows are drawn so that any information a non-descendant of $i$ contains about $i$ is mediated through $i$'s parents, the terms $\mu(x_i\vert ND(i))$ in \autoref{eq:chainrule} simplify to $\mu(x_i\vert Pa(i))$. This observation motivates \autoref{def:representation0} below.

\begin{Definition}\label{def:representation0}
Let $\mu\in\Delta(\Pi_{i\in\pazocal{N}}X_i)$. The DAG $(\{1,...,N\}, E)$ represents $\mu$ if, and only if, the following hold: $(\forall x\in \Pi_{i\in\pazocal{N}}X_i)$,
\begin{eqnarray*} 
\mu(x)&=&\Pi_{i=1}^N\mu(x_i\vert Pa(i))
\\ 
(\forall (\pazocal{T}_i)_{i\in\pazocal{N}})(\pazocal{T}_i\subset Pa(i)),\text{ if } \mu(x)&=&\Pi_{i=1}^N\mu(x_i\vert \pazocal{T}_i)\Rightarrow (\forall i\in\pazocal{N}), \pazocal{T}_i=Pa(i).
\end{eqnarray*}
\end{Definition}
\autoref{def:representation0} states two things. First, a DAG represents a probability distribution if, and only if, the DAG summarizes the conditional independence properties of $\mu$ in the sense discussed previously. Second, the set of parents is the smallest set that allows for such a decomposition. Indeed, consider a set of nodes $V=\{A,E,L\}$ and a probability distribution $\mu(x_A,x_E,x_L)=\mu(x_A)\mu(x_E)\mu(x_L)$. Since all variables are probabilistically  independent, both DAGs in \autoref{fig:minimality} represent this $\mu$. Indeed, both $\mu(x_A,x_E,x_L)=\mu(x_A)\mu(x_E \vert x_A)\mu(x_L\vert x_E)$ and $\mu(x_A,x_E,x_L)=\mu(x_A)\mu(x_E)\mu(x_L)$ are true statements. However, the first representation includes irrelevant arrows, which the minimality requirement prevents.

\begin{figure}[H]
\centering
\scalebox{1}{%
\begin{tikzpicture}[thick]
 nodes
\draw (-2,0) node {$A$};
\draw (1,0) node {$E$};
\draw (4,0) node {$L$};
\draw (-2,-2) node {$A$};
\draw (1,-2) node {$E$};
\draw (4,-2) node {$L$};
 arrows
\draw[black, ->] (-1.5,0)--(0.5,0);
\draw[black, ->] (1.5,0)--(3.5,0);
\end{tikzpicture}}
\caption{Both DAGs above represent the same probability distribution, $\mu(x_A,x_E,x_L)=\mu(x_A)\mu(x_E)\mu(x_L)$, but the top one includes irrelevant arrows.}\label{fig:minimality}
\end{figure}

\indent Importantly, graphs that represent conditional independence have no intrinsic causal interpretation. To see this, consider \autoref{fig:reverse} below. Both DAGs in the figure represent the same probability distribution: $\mu(A,E,L)=\mu(E)\mu(A\vert E)\mu(L\vert A)$. If arrows had causal meaning, then education would both be cause of ability (because $E\rightarrow A$ in \autoref{fig:g}) and $E$ would not be a cause of ability ($E\nrightarrow A$ in \autoref{fig:ghat}), which is a contradiction.

\begin{figure}[H]
\centering
\subfloat[Graph $G_{\mu}$]{\begin{tikzpicture}[thick,scale=1.5]
\draw (1,3) node {$E$};
\draw (1,1) node {$A$};
\draw (3,3) node {$L$};

\draw[black, ->] (1,1.25) -> (1,2.75);
\draw[black, ->] (1.25,1.21) -> (2.75,2.75);
\end{tikzpicture}\label{fig:g}}\hspace{1cm}
\subfloat[Graph $\hat{G}_{\mu}$.]
{\begin{tikzpicture}[scale=1.5,thick]
\draw (1,3) node {$E$};
\draw (1,1) node {$A$};
\draw (3,3) node {$L$};

\draw[black, ->] (1,2.75) -> (1,1.25);
\draw[black, ->] (1.25,1.21) -> (2.75,2.75);
\end{tikzpicture}\label{fig:ghat}}
\caption{Two directed acyclic graphs that represent $\mu$, $\mu(A,E,L)=\mu(A)\mu(E\vert A)\mu(L\vert A)$.}\label{fig:reverse}
\end{figure}

\paragraph{DAG representation of the DM's preference, $\overline{\succ}$} Using \autoref{def:representation0}, we can define when a graph represents a standard Savage preference. Suppose that $\succ$ is the DM's Savage preference defined on $\mathbbm{R}^{X}$. Under \autoref{ass:ass1}, there is a well-defined probabilistic representation of the DM's subjective beliefs encoded in $\succ$, $\mu\in\Delta(S)$. We then say that graph $G$ represents $\succ$ if $G$ represents $\mu$ in the sense of  \autoref{def:representation0}.

\indent However, \autoref{def:representation0} is insufficient to define when a graph represents a preference $\overline{\succ}$. Indeed, a preference $\overline{\succ}$ is associated with the collection of intervention preferences $\{\succ_p:p\in\pazocal{P}\}$. As such, $\overline{\succ}$ is associated with a family of beliefs, rather than a single belief, as in Savage's model. 

\indent To define when a DAG represents preferences $\overline{\succ}$, we first define the \emph{truncation} of a DAG. Let $G=(V,E)$ be a DAG, and let $W\subsetneq V$. The $W$-truncated DAG, $G_{W}$, is the DAG obtained by eliminating all nodes in $W$, together with their incoming and outgoing arrows. For example, \autoref{fig:representation-truncation} shows a DAG, $G$, on the left, and the truncation, $G_{E}$, on the right. After variables in $W$ are intervened on, they no longer form part of the DM's probabilistic model; they are now deterministic objects that are probabilistically  uninformative about the value of their parents. Thus, we exclude these variables from the corresponding DAG. In the context of our running example, if Alex observes that Mr. Kane obtained a college degree, his education is no longer random, but Alex can still make inferences about Mr. Kane's intellectual ability. Thus, education remains a legitimate element of Alex's probabilistic model. However, if Mr. Kane's education is intervened to ``college degree'', then his education level is no longer random and, furthermore, is uninformative about his ability level. Thus, we exclude education from Alex's post-intervention model. For example, the DAG in \autoref{fig:full} is a representation of $\succ$--Alex's preferences when no variables are intervened--and the  DAG in \autoref{fig:trunc} is a representation of $\succ_{x_E}$--Alex's preferences when education is intervened to any level $x_E$. From $G$ and $G_E$ we can deduce that Alex's beliefs satisfy $\mu(x_A,x_E,x_L)=\mu(x_A)\mu(x_E\vert x_A)\mu(x_L\vert x_A)$--earnings and education are independent conditional on ability--and $\mu_{x_E}(x_A,x_L)\neq\mu_{x_E}(x_A)\mu_{x_E}(x_L)$--ability is not independent of earnings after education is intervened.

\begin{figure}[H]
\centering
\subfloat[Full model]{
\scalebox{1}{%
\begin{tikzpicture}[thick]
\draw (-2,-2) node[below=3pt] {$A$};
\draw (-2,2) node[above=3pt] {$E$};
\draw (2,0) node[right=3pt] {$L$};
\draw[black, ->] (-2,-2) -- (-2,2);
\draw[black, ->] (-1.7,-2) -- (1.7,-0.3);
\end{tikzpicture}\label{fig:full}}
}\hspace{1cm}
\subfloat[After intervening $E$]{
\scalebox{1}{%
\begin{tikzpicture}[thick]
\draw (-2,-2) node[below=3pt] {$A$};
\draw (2,0) node[right=3pt] {$L$};
\draw[black, ->] (-1.7,-2) -- (1.7,-0.3);
\end{tikzpicture}\label{fig:trunc}}
}
\caption{Before and after intervening $E$. The left panel shows the full model, in which $E$ is informative about $L$ because knowing $E$ is informative of its cause, $A$. The right panel shows the model after intervening $E$, so that it is no longer part of the probabilistic model as $E$ is uninformative about its causes.}\label{fig:representation-truncation}
\end{figure}

 We can now define when a DAG, $G$, represents a preference $\overline{\succ}$. We  say that a graph represents a preference if, for each intervention, the appropriately truncated subgraph represents the corresponding intervention preference. 

\begin{Definition}[Conditional independence DAG of $\overline{\succ}$]\label{def:representation2}
Let $G=(\pazocal{N},E)$ be a DAG and $\overline{\succ}$ be a preference. Assume that for each $\pazocal{K}\subset\pazocal{N}$ and each $x_{\pazocal{K}}\in X_\pazocal{K}$, $\succ_{x_{\pazocal{K}}}$ has a well-defined probabilistic representation; let $\mu_{x_{\pazocal{K}}}$ be the corresponding representation. We say that $G$ \emph{represents} $\overline{\succ}$ if the following are true for each $\pazocal{K}\subset\pazocal{N}$ and each $x\in X$:
\begin{enumerate}[label=(\roman*)]
\item $G_{\pazocal{K}}$ represents $\mu_{x_\pazocal{K}}$,
\item If $(i,j)\in E$ then $\mu_{x_{-\{i,j\}}}(x_j\vert x_i)= \mu_{x_{-\{j\}}}(x_j)$.
\end{enumerate}
\end{Definition}

 Note that nothing in this section so far is related to causality. Indeed, the statement that a graph represents a probability distribution is purely a statement about probabilistic independence. As such, the representation of a probability by a DAG is a statement about conditional independence, not causation. 

\paragraph{Causal graphs} We conclude this section by defining the causal graph associated with a preference, $\overline{\succ}$. Given $\overline{\succ}$, draw a graph by letting the set of nodes be the set of variables, and the set of arrows be defined by the causal sets: $j\rightarrow i\Leftrightarrow j\in Ca(i)$. This graph is well defined because $Ca(i)$ is well defined for each $i\in\pazocal{N}$ via \autoref{def:causality}. We denote such a graph by $G(\overline{\succ})$. 

\begin{Definition}[Causal DAG of $\overline{\succ}$]\label{def:representation3}
Let $\overline{\succ}$ be a DM's preference, and, for each $i\in\pazocal{N}$. The \emph{causal graph of $\overline{\succ}$} is the graph $(V,E)=(\pazocal{N}, \{ (j,i)\: : \: i\in\pazocal{N},\: j\in Ca(i)\})$. Henceforth, we denote the causal graph of $\overline{\succ}$ as $G(\overline{\succ})$.
\end{Definition}
%
%

\section{Axioms}\label{sec:axioms}


 Because graphs that represent the conditional independence structure of a distribution may be unrelated to graphs that represent any given definition of ``causality'', we need a way to connect causality and conditional independence. \autoref{th:theorem1} provides this connection. The theorem states that Axioms \ref{ax:axiom3} through \ref{ax:axiom2}, presented in this section, are necessary and sufficient for a graph, $G$, to simultaneously represent the conditional independence structure in the DM's beliefs (\autoref{def:representation2}) while \emph{also} being the DM's causal graph (\autoref{def:representation3}). Furthermore, $G(\overline{\succ})$ is the only graph for which this holds. In this section, we present the axioms, and in the next section, we formally state \autoref{th:theorem1}.

\indent Our axioms connect conditional independence and intervention independence, so we need simple notation that combines these two notions of independence. We refer to this as \emph{conditional independence after an intervention}. We first present the formal definition and notation, we then motivate how conditional independence operates in the space of preferences.

\begin{Definition}\label{def:independence-notation}
Let $i\in\pazocal{N}$ and let $\pazocal{J},\pazocal{K},\pazocal{H}\subset\pazocal{N}$ be disjoint sets such that $i\notin \pazocal{J}\cup\pazocal{K}\cup\pazocal{H}$. We say that $i$ is independent of $\pazocal{J}$ conditional on $\pazocal{K}$ after intervening on $\pazocal{H}$ if the following is true for all $x_\pazocal{J}\in X_\pazocal{J}$, $x_{\pazocal{K}}\in X_\pazocal{K}$, $x_\pazocal{H}\in X_{\pazocal{H}}$, and all $f,g\in \mathbb{R}^{X_i}$:
\begin{eqnarray}\label{eq:indep-not}
\mathbbm{1}_{x_{\pazocal{K}}}f\succ_{x_{\pazocal{H}}} \mathbbm{1}_{x_{\pazocal{K}}}g \Leftrightarrow \mathbbm{1}_{x_{\pazocal{J}}}\mathbbm{1}_{x_{\pazocal{K}}}f\succ_{x_{\pazocal{H}}} \mathbbm{1}_{x_{\pazocal{J}}}\mathbbm{1}_{x_{\pazocal{K}}}g.
\end{eqnarray}
When the above holds, we write
\begin{eqnarray*}
i \perp_{\pazocal{H}} \pazocal{J} \vert \pazocal{K}.
\end{eqnarray*}
In the case in which $\pazocal{J}$ is a singleton, $\pazocal{J}=\{j\}$, we simply write $i \perp_{\pazocal{H}} j \vert \pazocal{K}$.
\end{Definition}

 Conditional independence after an intervention is tied to the value of information: if $i$ and $j$ are independent, a DM that wants to predict the value of $i$ would never pay for information about $j$. Imagine a DM intervened some variables $\pazocal{H}$ to a specific level, $x_\pazocal{H}$. For instance, the DM might have carried out a controlled experiment, or this could simply be a thought experiment. Furthermore, the DM observes a specific realization of some other variables, $x_{\pazocal{K}}\in X_\pazocal{K}$. If the DM has to choose between $f\in\mathbb{R}^{X_i}$ and $g\in\mathbb{R}^{X_i}$, they would compare $\mathbbm{1}_{x_{\pazocal{K}}}f\in\mathbb{R}^{X_{\pazocal{K}}\times X_i}$ with $\mathbbm{1}_{x_{\pazocal{K}}}g\in\mathbb{R}^{X_{\pazocal{K}}\times X_i}$ using preferences $\succ_{x_{\pazocal{H}}}$. This is the left hand side of \autoref{eq:indep-not} in \autoref{def:independence-notation}. To aid the DM's decision, someone offers to reveal the DM the value of the variables in $\pazocal{J}$ for a fee $\varepsilon>0$. Is there an $\varepsilon$ small enough that the DM would purchase this information? If the DM bought this information about $\pazocal{J}$, then their problem becomes to compare $\mathbbm{1}_{x_{\pazocal{K}}}\mathbbm{1}_{x_{\pazocal{J}}}f$ with $\mathbbm{1}_{x_{\pazocal{K}}}\mathbbm{1}_{x_{\pazocal{J}}}g$ using preferences $\succ_{x_{\pazocal{H}}}$. This is the right hand side of \autoref{eq:indep-not} in \autoref{def:independence-notation}. If the DM's choice is the same in both situations (as in \autoref{def:independence-notation}), the information is useless. Thus, the DM would not accept any price $\varepsilon>0$. This is what conditional independence after an intervention means.

\paragraph{Axioms} Jointly, Axioms \ref{ax:axiom3} through \ref{ax:axiom4} specify that for each variable $i$, only two sets of variables are intrinsically informative about $i$: the causes of $i$, and the consequences of $i$. \autoref{ax:axiom3} states that $Ca(i)\cup Co(i)$ are always informative about $i$, and Axioms \ref{ax:secret} and \ref{ax:axiom4} specify that any information other variables provide about $i$ is mediated by $i$'s causes or consequences. Consequently, the information other variables provide about $i$ is a garbling of the information already contained in $Ca(i)\cup Co(i)$: a decision maker who wants to predict the value of $i$, and already knows the values of $Ca(i)\cup Co(i)$, should not be willing to pay any price to observe the value of other variables.

\begin{Axiom}[Causes are jointly informative]\label{ax:axiom3}
For all $i\in \pazocal{N}$, $\pazocal{J}\subset \pazocal{N}_{-i}$ and $\pazocal{H}\subset\pazocal{N}_{-i}$ that is disjoint from $\pazocal{J}$, if $i$ and $\pazocal{J}$ are independent given $Ca(i)\setminus\pazocal{J}$ after intervening $\pazocal{H}$, then $\pazocal{J}$ is disjoint from the causes of $i$. Formally, 
\begin{eqnarray*}
i\text{ }\perp_\pazocal{H}\text{ } \pazocal{J}\text{ } \vert\text{ } (Ca(i)\setminus\pazocal{J}) \Rightarrow Ca(i)\cap\pazocal{J}=\emptyset.
\end{eqnarray*}

\end{Axiom}

\autoref{ax:axiom3} states that the full set of causes of a variable $i$ is informative about $i$. That is, even if the DM knew information about some, but not all, causes of $i$, the remaining causes are still informative about $i$. In contrast to what \autoref{ax:axiom3} states, suppose that we extract from $Ca(i)$ a proper subset, $\pazocal{J}$, and as a result the DM judges that $i$ and $\pazocal{J}$ are independent conditional on $Ca(i)\setminus\pazocal{J}$. Then, a DM that wants to predict $i$ and knows the realization of the (strict) subset $Ca(i)\setminus\pazocal{J}$ would never pay for information on the remaining causes, $\pazocal{J}$. This implies that only a strict subset of the causes are informative, because $Ca(i)\setminus\pazocal{J}$ provides the same information about $i$ as $Ca(i)$.  \autoref{ax:axiom3} rules this out by requiring that if the DM judges that $i$ and $\pazocal{J}$ are independent after conditioning on $Ca(i)\setminus\pazocal{J}$ and intervening on $\pazocal{H},$ then $\pazocal{J}$ contains no causes of $i$.

\autoref{ax:axiom3} above states that the causes of a variable, $i$, are \emph{jointly} informative about $i$, but it does not state that each cause, $j\in Ca(i)$, is \emph{individually} informative about $i$. To understand the difference, consider the following example three-variable example, $\pazocal{N}=\{j_0,j_1,i\}$. Variables $j_0$ and $j_1$ are independent of each other and take values $0$ or $1$ with probability $\frac{1}{2}$. Suppose that $x_i=1$ if $x_{j_1}=x_{j_0}$ and $x_i=0$ otherwise. It is immediate to see that $i$ is independent of both $j_0$ and $j_1$, but $i$ is not independent of the set $\{j_0,j_1\}$: indeed, while information about  $j_0$ or $j_1$ individually is not valuable, as it will not change the DM's beliefs about $i$, once the DM has information about $j_0$, then information about $j_1$ becomes valuable. The variables $\{i,j_0,j_1\}$ are an example of an encryption key: knowing just one piece of the encryption key is useless, but knowing both is valuable. 

\autoref{ax:secret} below relates the example in the previous paragraph to the DM's causal model. It says that if a series of variables, $j_0,...., j_n$, are neither causes nor indirect consequences of $i$, and they are individually independent of $i$ conditional on $i$'s causes, then they are also jointly independent of $i$ conditional on $i$'s causes. In other words, any encrypted information that a set of variables have about $i$ must be factored in through the causes (or consequences) of $i$.

\begin{Axiom}[Causes decode encryption]\label{ax:secret}
For all $i\in\pazocal{N}$, all $\pazocal{J}=\{j_0,...,j_N\}\subset\pazocal{N}_{-i}$, and all $ \pazocal{K}\subset\pazocal{N}\setminus\left(\{i\}\cup\pazocal{J}\right)$, if $i\notin ICa(\pazocal{J})$ and $\pazocal{J}\subset Ca(i)^\complement$, then
\begin{eqnarray*}
i\perp_{\pazocal{K}} j \vert Ca(i)\setminus \pazocal{K} \text{ for all $j\in\pazocal{J}$ } \Rightarrow i\perp_{\pazocal{K}} \pazocal{J} \vert Ca(i)\setminus \pazocal{K}.
\end{eqnarray*}
\end{Axiom}

\autoref{ax:axiom4} complements Axioms \ref{ax:axiom3} and \ref{ax:secret}, and all three jointly imply that the set of causes of $i$ is informative and a sufficient statistic for $i$. While  \autoref{ax:axiom3} describes the conditional independence properties of variables that cause each other, \autoref{ax:axiom4} describes the independence properties of variables that do \emph{not} cause each other. Furthermore, while \autoref{ax:secret} shows when we can extrapolate joint conditional independence from individual conditional independence, \autoref{ax:axiom4} states when two variables are individually conditionally independent of each other.

\autoref{ax:axiom4} states the following. Suppose $i$ and $j$ are two variables that the DM believes do not cause each other. Then, the DM should believe that, conditional on $i$'s and $j$'s causes, $i$ and $j$ are independent. Consequently, variables that are neither causes nor consequences of $i$ may only be informative about $i$ if their information is mediated via the causes (and therefore, consequences) of $i$. In this regard, the information contained in $Ca(i)\cup Co(i)$ is a sufficient statistic for $i$: information contained in any other variable is a garbling of the information already present in $Ca(i)\cup Co(i)$.\footnote{While the axiom does not explicitly reference the consequences of $i$, the consequences of $i$ are implicitly represented as they are the causes of variables downstream from $i$. For instance, in \autoref{fig:figaxiom4}, $c$ is a consequence of $i$ and shows up in the axiom as part of the causes of $j$.}

\begin{Axiom}[Causes are sufficient statistics]\label{ax:axiom4}
For all $i,j\in\pazocal{N}$, and for all $ \pazocal{K}\subset\pazocal{N}\setminus\{i,j\}$, if $i\notin Ca(j)$ and $j\notin Ca(i)$, then
\begin{eqnarray*}
i\perp_{\pazocal{K}} j \vert \left(Ca(i)\cup Ca(j)\right)\setminus \pazocal{K}.
\end{eqnarray*}
\end{Axiom}

\indent \autoref{fig:figaxiom4} below illustrates \autoref{ax:axiom4}. Suppose a DM has to predict the value of $i$ and contemplates purchasing information about the realization of $j$. Suppose the DM already knows the realizations of $i$'s and $j$'s causes, $x_b$ and $x_c$. \autoref{ax:axiom4} says the DM should not pay for information about $j$'s realization: because any information $j$ contains about $i$ should be mediated by $\{b,c\}$, knowing the values of $b$ and $c$ means $j$ is uninformative about $i$. Thus, additional information about $j$ should not be valuable.

\begin{figure}[H]
\centering
\scalebox{1}{%
\begin{tikzpicture}[thick]
 nodes
\draw (0,0) node[left=3pt] {$b$};
\draw (0,2) node[left=3pt] {$i$};
\draw (2,0) node[right=3pt] {$j$};
\draw (2,2) node {$c$};
\draw (4,2) node[right=3pt] {$k$};
\draw[->] (0,0.5)--(0,1.5);
\draw[->] (0.5,2)--(1.5,2);
\draw[->] (2,1.5)--(2,0.5);
\draw[->] (0.5,0)--(1.5,0);
\draw[->] (2.5,2)--(3.5,2);
\draw[->] (2.5,0.5)--(3.5,1.5);
\end{tikzpicture}}
\caption{$i$ and $j$ are independent conditional on their respective causes.}\label{fig:figaxiom4}
\end{figure}

 We now present \autoref{ax:axiom0}, which guarantees that the state space does not omit any relevant variables. In line with \cite{savage1972foundations}, we then say that the state space is \emph{complete}:

\begin{Axiom}[Complete state space]\label{ax:axiom0}
$(\forall i,j\in\pazocal{N})$, $(\forall x_{\pazocal{N}\setminus\{i\}}\in X_{\pazocal{N}\setminus\{i\}})$, and $(\forall f,g\in\mathbb{R}^{X_i})$; if $j\in Ca(i)$, then $f\succ_{x_{\pazocal{N}\setminus\{i,j\}}, x_j}g\Leftrightarrow \mathbbm{1}_{x_j}f\succ_{x_{\pazocal{N}\setminus\{i,j\}}}\mathbbm{1}_{x_j}g$.
\end{Axiom}

\indent \autoref{ax:axiom0} states that for any two variables, $i$ and $j$, once the value of all other variables is fixed, causation and conditional independence coincide. To see this, suppose $j$ causes $i$. Then, $j$ contains information about $i$, so the DM should be willing to pay a small enough $\varepsilon$ to learn the value of $j$. But should the DM pay more for \emph{observing $j$'s realization}, $x_j$, or for the ability to \emph{intervene} $j$ at $x_j$? The term $f\succ_{x_{\pazocal{N}\setminus\{i,j\}}, x_j}g$ shows the DM's choice when they intervene $j$ at value $x_j$, whereas the term $\mathbbm{1}_{x_j}f\succ_{x_{\pazocal{N}\setminus\{i,j\}}}\mathbbm{1}_{x_j}g$ is the DM's choice when they condition on the realization $x_j$. Because the choice is the same in both settings, the value of information (i.e., the fee charged to this DM in exchange for information on $j$'s value) is the same whether the value of $j$ is observed or intervened. This equivalence is justified only if the state space includes all relevant variables--i.e., if the state space is complete. Analogously, if neither variable caused the other, $i$ and $j$ would be both conditionally and intervention independent of each other. 
%

\indent Completeness of the state space does \emph{not} imply that the DM must know what all the relevant variables are. For instance, assume that Alex from the introduction is worried that the interaction among ability, education and lifetime earnings might be affected by some other variable. Concretely, Alex thinks that some other variable might influence education: Alex does not know what this variable is, but believes that it exists. For concreteness, denote this variable as an ``unknown but possibly existing variable''. \autoref{ax:axiom0} says that Alex's state space should include such a variable. Therefore, the state space should not be $A\times E\times L$, but rather, $A\times E\times L\times U$, where $U$ stands for ``unknown but possibly existing variable''. In short, \autoref{ax:axiom0} allows the econometrician to add variables that act as proxies for unknown shocks to the system. Indeed, modeling a potential unknown confounder as exogenous noise shocks is a common way to proceed in empirical studies that is accommodated in the state space of our model.

Our final axiom, \autoref{ax:axiom2}, guarantees acyclicity of the causal graph, $G(\overline{\succ})$. The axiom states that the DM's state space only includes logically independent variables.

\begin{Axiom}[Existence of primitives]\label{ax:axiom2}
For all $\pazocal{I}\subset\pazocal{N}$, there exists $i\in\pazocal{I}$ such that $Ca(i)\cap\pazocal{I}=\emptyset$
\end{Axiom}

\indent Under \autoref{ax:axiom2}, if the DM is asked to explain the relation between variables in $\pazocal{I}$, and only those in $\pazocal{I}$, the DM has an explanation that involves at least one exogenous primitive relative to $\pazocal{I}$. Models without primitives describe identities rather than relations among logically independent variables. 

%
\indent When analyzing \autoref{ax:axiom2} we must not confuse a functional relation between variables--such as an equilibrium condition--with intrinsic causal statements. For example, consider a general equilibrium model with aggregate demand curve $D$ and aggregate supply curve $S$. The equilibrium is defined as follows: $(p^*,q^*)$ constitutes an equilibrium if $D(p^*)=q^*$ and $S(p^*)=q^*$. Note that this is a definition; as such, the \emph{equilibrium} price and \emph{equilibrium} quantity are not logically independent. These equations describe the values one should ultimately expect for prices and quantities but are silent regarding the mechanism that generated them. This silence precisely motivates the equilibrium convergence literature. For example, a tat\^onnement convergence process is compatible with the general equilibrium equations without invoking feedback loops: a DM posits that prices in period $t$ cause quantities in period $t$ (via consumer/producer optimization) and that quantities in period $t$ cause prices in period $t+1$ (through a Walrasian auctioneer process that increases/decreases the price in response to excess demand/supply). That the system stabilizes at a point where $p_t=p_{t+1}=p^*$ and $q_t=q_{t+1}=q^*$ does not imply that equilibrium prices cause equilibrium quantities or vice versa. In short, functional equations describe numerical relations between variables, but are not causal statements.

\indent While Axioms \ref{ax:axiom3} through \ref{ax:axiom2} are our basic axioms, \autoref{ax:axiom6} is a supplementary axiom that is relevant for \autoref{th:newtheorem2}. We present it here in the interest of containing all axioms within a single section. We delay a full discussion of \autoref{ax:axiom6} until \autoref{sec:statistics}, where we use it to state \autoref{th:newtheorem2}.

\begin{Axiom}\label{ax:axiom6}
$(\forall i\in\pazocal{N})$, $(\forall\pazocal{J}\subset\pazocal{N}\setminus\{i\})$ $(\forall f,g\in\mathbb{R}^{X_i})$, $(\forall x_{Ca(i)\cup\pazocal{J}}\in X_{Ca(i)\cup\pazocal{J}})$,
\begin{eqnarray*}
\mathbbm{1}_{\{x_{Ca(i)}\}}f\succ\mathbbm{1}_{\{x_{Ca(i)}\}}g\Leftrightarrow \mathbbm{1}_{x_{Ca(i)\setminus\pazocal{J}}}f\succ_{x_{\pazocal{J}}}\mathbbm{1}_{x_{Ca(i)\setminus\pazocal{J}}}g.
\end{eqnarray*}
\end{Axiom}

\section{Results}\label{sec:results}
 In this section we present and discuss our two main results, Theorems \ref{th:theorem1} and \ref{th:newtheorem2}. Our first main result is Theorem \ref{th:theorem1}, stated below.

\begin{Theorem}\label{th:theorem1}
Let $\overline{\succ}$ satisfy \autoref{ass:ass1}. The following are equivalent:
\begin{enumerate}[label=(\roman*)]
\item Axioms \ref{ax:axiom3} through \ref{ax:axiom2} hold, and
\item $G(\overline{\succ})$ is a DAG and represents $\overline{\succ}$ in the sense of \autoref{def:representation2}.
\end{enumerate}
Furthermore, if $G$ also represents $\overline{\succ}$ in the sense of \autoref{def:representation2}, then $G=G(\overline{\succ})$.
\end{Theorem}

 \autoref{th:theorem1} provides a foundation for the analysis of causality that addresses the issues mentioned in the introduction: defining ``causality'', specifying its relation to probabilistic independence, and representing both causality and probabilistic independence in the same model. We now discuss how \autoref{th:theorem1} addresses these challenges.

\paragraph{Defining and representing causality} Any model on causality must begin with a precise definition of causality. Presumably, this definition is connected to the independence properties of the relevant variables, which can be graphically represented in a DAG. A natural intuition is that causality will inherit this graphical representation, but \autoref{fig:reverse} shows this is false. Two different DAGS--i.e., two different sets of arrows--can represent the same set of conditional independence properties; therefore, arrows cannot have intrinsic causal meaning. Consequently, the exact assumptions on how conditional independence and causality interact will determine whether the given definition of causality admits a DAG representation. \autoref{th:theorem1} shows the exact conditions on conditional independence and Bayesian causality under which Bayesian causality admits a DAG representation. 

\indent This provides one of our main contributions to the literature originating with \cite{pearl1995causal}. In these papers, ``causality'' is never defined. Instead, it is assumed that the DM has some definition of ``causality'' in mind, and that this definition of ``causality'' can be represented in a DAG. In our paper, we provide a precise definition of causality--Bayesian causality--and we show the exact axioms under which Bayesian causality can be represented in a DAG.

\indent \autoref{th:theorem1} also implies that Bayesian causality is the only definition of causality with a DAG representation compatible with $\overline{\succ}$. Suppose $C$ is some other definition of causality, and let $G(C)$ be the causal DAG of $C$: that is, $G(C)=(\pazocal{N},\{(i,j): \text{$i$ is a $C$ cause of $j$}\})$. Then, if $G(C)$ represents $\overline{\succ}$ in the sense of \autoref{def:representation2}, then $G(C)=G(\overline{\succ})$; that is, $i$ is a $C$ cause of $j$ if and only if $i$ is a Bayesian cause of $j$.

\paragraph{Why DAGs?} Let us consider alternative models of causality that cannot be represented via a DAG. Such a model should involve a precise definition of ``causality'', $C$. That is, a binary relation, $C$, exists such that statements ``$i$ causes $j$'' are representable as $(i,j)\in C$. There are two main ways in which $C$ might not be representable via a DAG. First, if $C$ coincides with Bayesian causality, \autoref{th:theorem1} implies that the DM's behavior violates our axioms. Our axioms, however, only impose natural assumptions: that the causes and consequences of a variable are a sufficient statistic for the information about that variable, that the state space is complete, and that each (sub)model has at least one primitive. Second, $C$ may not coincide with Bayesian causality. In general, economics differentiates ``causality'' from conditional dependencies because the former is relevant for policy action whereas the latter is not. We believe Bayesian causality is a reasonable formal expression of this idea. As such, \autoref{th:theorem1} implies that DAGs are a natural language in which to express causal models.

\paragraph{The role of each axiom} Axioms \ref{ax:axiom3}--\ref{ax:axiom4} together imply that the causes (and therefore the consequences) of a variable $i$ are a sufficient statistic for that variable. Any other information we can obtain about $i$ is a garbling of the information already contained in $Ca(i)\cup Co(i)$. This delivers an (undirected) graph that represents the conditional independence properties of the DM's beliefs, but the direction of causality is undetermined.\footnote{As mentioned in the introduction, conditional independence may be defined both via directed and undirected graphs.} That is, if an undirected link between $i$ and $j$ exists, then one is a (Bayesian) cause of the other, but Axioms \ref{ax:axiom3}--\ref{ax:axiom4} cannot tell us which is the cause and which the consequence. \autoref{ax:axiom0} determines the direction of causality. That an undirected link between $i$ and $j$ exists means $i$ and $j$ are not independent, even after conditioning on all other variables. Because conditional independence is a symmetric relation we cannot know which is the cause. However, intervention independence is not symmetric. For two variables, $i$ and $j$, consider the $\pazocal{N}\setminus\left\{i,j\right\}$ intervention preferences over $\mathbb{R}^{X_{\{i,j\}}}$. Now consider the intervention preferences on $\mathbb{R}^{X_i}$ after also intervening $j$, and compare those to the Savage conditional beliefs over $\mathbb{R}^{X_i}$. If $j$ was the cause of $i$, these two should coincide, but this is only guaranteed if $j$ is the cause and not the consequence. Thus \autoref{ax:axiom0} permits identification of the direction of causality. Finally, \autoref{ax:axiom2} implies acyclicity of the DM's causal DAG, $G(\overline{\succ})$.

\paragraph{The ``causal effect'' of a variable on another} Finally, our model differentiates between the definition of (Bayesian) causality, which focuses on the direct impact that one variable has on the DM's beliefs about another variable, and common usages of the term ``causal effect'', which may incorporate indirect effects as well. On the one hand, the definition of Bayesian causality follows a ceteris paribus approach: do changes to one variable affect the DM's beliefs about another variable once we exclude any intermediary (or confounding) effects coming from other variables? On the other hand, a DM (say, an applied economist) might want to understand how interventions on one variable affect beliefs about other variables \emph{through all possible channels}. For example, the DM may be a labor economist who believes that education policies have no direct effect on lifetime earnings, but that education policies have an \emph{indirect} effect on lifetime earnings: higher education leads to higher ability levels which, in turn, lead to higher lifetime earnings. That is, while education is not a Bayesian cause of earnings, education still has a ``causal effect'' on earnings.

\indent  Via the DAG formalism, our model makes the distinction between ``cause'' and ``causal effect'' transparent. The ceteris paribus approach is important to get a definition of Bayesian causality. By Theorem \ref{th:theorem1}, Bayesian causality provides the collection of all arrows pointing from Bayesian cause to Bayesian consequence; these are the elemental building blocks of all causal effects between variables. Having constructed these arrows, we can construct all paths joining one variable to another. These paths consequently encode all causal effects, both direct and indirect, through which one variable affects beliefs about another. For the interested reader, Appendix \ref{app:causal-effect} formally defines direct causal effects and indirect causal effects, and illustrates the difference between them via examples.


\subsection{Identification of intervention beliefs} \label{sec:statistics}

 In this section, we address the last issue posed in the introduction: can we express Bayesian causality in terms of probabilistic independence? In principle, a DM needs to form a collection of beliefs, represented by a collection of probabilities, $(\mu_p)_{p\in\pazocal{P}}$, elicitation of which requires observing behavior in the full domain $\left\{(p,f): p\in\pazocal{P},\:,f\in\pazocal{F}(p)\right\}$. However, beliefs elicited in the non-intervention problem--i.e., beliefs elicited from the standard Savage preference, $\succ$--may encode all the information contained in each $(\mu_{p})_{p\in\pazocal{P}}$. We then say intervention beliefs are \emph{identified}. Identification means the DM only really needs to form beliefs as in the traditional Savage model; all other intervention beliefs are derived from the standard setting. 

\indent Below, we present an informal statement of \autoref{th:newtheorem2}: \autoref{ax:axiom6} is necessary and sufficient for intervention beliefs to be identified. Example \ref{ex:identification} illustrates how \autoref{ax:axiom6} allows identification of intervention beliefs. The example illustrates that \autoref{ax:axiom6} connects the structure of arrows in $G(\overline{\succ})$ to two different operations we may carry out on intervention beliefs, denoted \emph{exchange} and \emph{eliminate}. We conclude the section with a formal statement of \autoref{th:newtheorem2}: \autoref{ax:axiom6} is necessary and sufficient for the exchange and eliminate rules to hold. 

\begin{Theorem*}[Informal statement of \autoref{th:newtheorem2}]
Let $\overline{\succ}$ satisfy \autoref{ass:ass1}, and Axioms \ref{ax:axiom3} through \ref{ax:axiom2}. The following are equivalent:
\begin{enumerate}[label=(\roman*)]
\item \autoref{ax:axiom6} holds, and
\item Intervention beliefs are identified: for each $p\in\pazocal{P}$, $\mu_p$ is written purely in terms of $\mu$.
\end{enumerate}
\end{Theorem*}

 To illustrate the role of \autoref{ax:axiom6} in identifying intervention beliefs, we first remind the reader of the statement of \autoref{ax:axiom6}. We follow up with a discussion on what \autoref{ax:axiom6} implies in terms of the DM's behavior, and conclude with a discussion of how \autoref{ax:axiom6} permits identification of intervention beliefs.

\begin{axiom6restated}
$(\forall i\in\pazocal{N})$, $(\forall\pazocal{J}\subset\pazocal{N}\setminus\{i\})$, $(\forall f,g\in\mathbb{R}^{X_i})$, $(\forall x_{Ca(i)\cup\pazocal{J}}\in X_{Ca(i)\cup\pazocal{J}})$,
\begin{eqnarray} \label{ax6eq}
\mathbbm{1}_{\{x_{Ca(i)}\}}f\succ\mathbbm{1}_{\{x_{Ca(i)}\}}g\Leftrightarrow \mathbbm{1}_{x_{Ca(i)\setminus\pazocal{J}}}f\succ_{x_{\pazocal{J}}}\mathbbm{1}_{x_{Ca(i)\setminus\pazocal{J}}}g.
\end{eqnarray}
\end{axiom6restated}

\indent  \autoref{ax:axiom6} states the following two decision problems are equivalent. Given a variable $i$ and acts $f,g\in\mathbb{R}^{X_i}$, the first problem is to choose $f$ or $g$ when their payments are contingent on the causes of $i$ obtaining a particular value, $x_{Ca(i)}$. In the second decision problem, the DM intervenes on a subset of causes of $i$ (say, intervening $\pazocal{J}\subset Ca(i)$ to the value $x_\pazocal{J}$), and the payments of $f$ and $g$ are now contingent on the values of the non-intervened causes, $x_{Ca(i)\setminus\pazocal{J}}$, being realized. From a numerical standpoint, both these situations result in the same value for the causes of $i$ (namely, $x_{Ca(i)}$); the difference is \emph{how} those values are obtained. In the first problem, it is simply by selecting a standard Savage conditional act; in the second problem, it is by a combination of interventions and Savage conditional acts. Because \autoref{ax:axiom6} requires that these two problems be treated identically, \autoref{ax:axiom6} implies that the only aspect of interventions that matters is the value the intervention sets for the variable. In other words, the act of intervening on a variable does not, in itself, change the DM's structural view of the world.

\indent We use \autoref{fig:axiom6fig} below to illustrate \autoref{ax:axiom6}:

\begin{figure}[H]
\centering
\scalebox{1}{%
\begin{tikzpicture}[thick]
\draw (-0,0) node {$j$};
\draw (0,4) node {$k$};
\draw (4,4) node {$w$};
\draw (4,0) node {$i$};
\draw[->] (0,3.5)--(0,.5);
\draw[->] (0.5,4)--(3.5,4);
\draw[->] (0.5,0)--(3.5,0);
\draw[->] (4,3.5)--(4,.5);
\end{tikzpicture}}
\caption{Observing or intervening on $j$ makes the DM update differently about $k$. This difference in updating may affect the DM's beliefs about $i$. }\label{fig:axiom6fig}
\end{figure}

\indent First, we explain why the left hand side of \autoref{ax6eq} involves all causes of $i$. Suppose that a DM has to choose between two acts over $i$ (say, $f,g\in\mathbb{R}^{X_i}$), the payments of which are contingent on $j$ taking value $x_j$. That is, the DM has to choose between $\mathbbm{1}_{x_j}f$ and $\mathbbm{1}_{x_j}g$. Note that $\{j\}$ is a strict subset of $Ca(i)$. Observing that $j$ takes the value $x_j$ gives the DM information about the value of $k$; in turn, this information about $k$ gives the DM information about $w$, which ultimately gives the DM information about $i$. Thus, observing that $j$ took the value $x_j$ is informative about $i$ in two ways: directly, because $j\in Ca(i)$, and indirectly, via $k$ and $w$. If the DM intervenes on $j$ at value $x_j$, the DM receives the same direct information about $i$ but loses the indirect information mediated via $k$ and $w$. Thus, the DM could say that $\mathbbm{1}_{x_j}f\succ\mathbbm{1}_{x_j}g$ but $g\succ_{x_j}f$. Clearly, observing $x_j$ or intervening on variable $j$ and moving it to value $x_j$ are different problems in terms of the DM's updating. Hence, \autoref{ax:axiom6} is reasonable only when we fix \emph{all} causes of $i$.

\indent Now, consider the situation above but where the payments of $f$ and $g$ involve \emph{all} causes $i$, $j$ \emph{and} $w$. On the left hand side of \autoref{ax6eq}, the DM observes the realizations $x_j$ and $x_w$, and chooses between $\mathbbm{1}_{x_j,x_w}f$ and $\mathbbm{1}_{x_j,x_w}g$. On the right hand side of \autoref{ax6eq}, the DM intervened $j$ at level $x_j$, and then chooses between $\mathbbm{1}_{x_w}f$ and $\mathbbm{1}_{x_w}g$. Following the logic in the previous paragraph, intervening $j$ means that the DM loses the information $j$ provides about $i$ via $w$. However, because the realization of $w$ is observed in both decision problems, no actual information is lost. Thus, the DM has the same information in both problems so they should make the same decision in both problems. This result is precisely what \autoref{ax:axiom6} requires.

\indent Both of the above discussions addressed $\pazocal{J}\subset Ca(i)$, but to complete our discussion of \autoref{ax:axiom6}, we must allow that $\pazocal{J}$ contains non-causes of $i$. \autoref{ax:axiom6} states that once we know the value of all the causes of $i$, intervening variables that are not causes of $i$ are uninformative about $i$. In \autoref{fig:axiom6fig}, if an act's payments are contingent on $x_w$ and $x_j$, then intervening and shifting the value of $k$ to some $x_k$ is uninformative about $i$.

 \autoref{ax:axiom6} implies the following conditions on the DM's beliefs: $(\forall i\in\pazocal{N})$, $(\forall\pazocal{J}\subset\pazocal{N}\setminus\{i\})$, $(\forall x_i\in {X_i})$, $(\forall x_{Ca(i)\cup\pazocal{J}}\in X_{Ca(i)\cup\pazocal{J}})$,
\begin{eqnarray}
\mu(x_i\vert x_{Ca(i)}) = \mu_{x_\pazocal{J}}(x_i\vert x_{Ca(i)\setminus\pazocal{J}}). \label{ax6eq2}
\end{eqnarray}

In words, this means that we may freely exchange conditioning on the causes of a variable with conditioning on a subset of causes and intervening the remaining causes. On the left hand side of \autoref{ax6eq2}, we have the DM's beliefs on $i$ conditional on the causes of $i$. On the right hand side of equation \ref{ax6eq2}, we are intervening a subset of variables, $\pazocal{J}\subset \pazocal{N}$, and we are conditioning on the remaining causes, $Ca(i)\setminus\pazocal{J}$. \autoref{ax:axiom6} implies these two operations coincide. From  \autoref{ax6eq2}, we obtain the exchange and eliminate rules, as illustrated in \autoref{ex:identification} below: 

\begin{Example}\label{ex:identification}
Consider the following DAG: education causes higher ability levels, which cause higher lifetime earnings. However, socio-economic status of childhood neighborhood ($S$) is a common cause of both education and lifetime earnings. The DM wants to calculate the effects that an education policy has on lifetime earnings; i.e., how $\mu_{x_E}(x_L)$ varies with $x_E$ for each $x_L$. Under \autoref{ax:axiom6}, we obtain the following formula:
\begin{eqnarray}
\mu_{x_E}(x_L)=\sum_{x_S}\mu(x_L\vert x_E, x_S)\mu(x_S). \label{formulaid}
\end{eqnarray} 

\indent This particular formula is relevant as it allows identification of $\mu_{x_E}$ without using the distribution of $A$. If the DM forms their Bayesian beliefs by updating some prior using the data available to them, this means the DM does not need data on (the possibly unobservable) $A$ to identify the effect of an education policy on lifetime earnings. Hence, this is an identification formula that is independent from the DM's bayesian prior over $A$. 
\begin{figure}[H]
\centering
\begin{tikzpicture}[thick]
\draw (3,3) node {$S$};
\draw (1,1) node {$E$};
\draw (3,1) node {$A$};
\draw (5,1) node {$L$};
\draw (5,0.3) node {};
\draw (5,3.3) node {};

\draw[black, ->] (2.5,2.5) -> (1.5,1.5);
\draw[black, ->] (1.5,1) -> (2.5,1);
\draw[black, ->] (3.5,1) -> (4.5,1);
\draw[black, ->] (3.5,2.5) -> (4.5,1.5);
\end{tikzpicture}
\caption{An extended confounding triangle.}\label{fig:compare}
\end{figure}
\end{Example}

\indent We now use \autoref{ax:axiom6} to derive \autoref{formulaid} step by step. First, we write $\mu_{x_E}(x_L)=\sum_{x_S}\mu_{x_E}(x_L\vert x_S)\mu_{x_E}(x_S)$. Therefore, it suffices to identify the expressions $\mu_{x_E}(x_L\vert x_S)$ and $\mu_{x_E}(x_S)$.
\paragraph{Identifying $\mu_{x_E}(x_S)$ in Example \ref{ex:identification}} Setting $i=S$,  $Ca(S)=\emptyset$, and $\pazocal{J}=\{E\}$, \autoref{ax:axiom6} implies $\mu_{x_E}(x_S)=\mu(x_S)$. Notice that the expressions $\mu_{x_E}(x_S)$ and $\mu(x_S)$ are the same, except that the intervention $x_E$ is eliminated. 
\paragraph{Identifying $\mu_{x_E}(x_L\vert x_S)$ in \autoref{ex:identification}} We now show that $\mu_{x_E}(x_L\vert x_S)=\mu(x_L\vert x_E, x_S)$. We can derive this by applying \autoref{ax:axiom6} and \autoref{ax:axiom4}. If we select $i=L$, $Ca(i)=\{A, S\}$, and $\pazocal{J}=\{E\}$, \autoref{ax:axiom6} implies $\mu_{x_E}(x_L\vert x_S, x_A )=\mu(x_L\vert x_S, x_A )$. Furthermore, \autoref{ax:axiom4} implies that $\mu_{x_E}(x_A\vert x_S)=\mu_{x_E}(x_A)$; i.e. $A$ and $S$ are independent after intervening $E$. We can apply \autoref{ax:axiom6} to $\mu_{x_E}(x_A)$ and obtain $\mu_{x_E}(x_A)=\mu(x_A\vert x_E)=\mu(x_A\vert x_S, x_E)$, where the last equality follows because $S$ and $A$ are independent conditional on $E$. Together, we obtain the following identification:
\begin{eqnarray*}
\mu_{x_E}(x_L\vert x_S)&=&\sum_{x_A}\mu_{x_E}(x_L\vert x_S, x_A)\mu_{x_E}(x_A\vert x_S)\\
&=&\sum_{x_A}\mu(x_L\vert x_S, x_A)\mu(x_A\vert x_S, x_E)\\
&=&\sum_{x_A}\mu(x_L\vert x_S, x_E, x_A)\mu(x_A\vert x_S, x_E)\\
&=&\mu(x_L\vert x_E, x_S),
\end{eqnarray*}
where the second line follows from applying our axioms, and the third line holds because $E$ and $L$ and independent given $A$ and $S$. Comparing $\mu_{x_E}(x_L\vert x_S)$, and $\mu(x_L\vert x_E, x_S)$ the difference is that we exchanged the intervention belief $\mu_{x_E}$ with the conditional probability $\mu(\cdot\vert x_E, \cdot)$. 
\paragraph{Completing the identification in \autoref{ex:identification}} Because $\mu_{x_E}(x_L)=\sum_{x_S}\mu_{x_E}(x_L\vert x_S)\mu_{x_E}(x_S)$, $\mu_{x_E}(x_S)=\mu(x_S)$ and $\mu_{x_E}(x_L\vert x_S)=\mu(x_L\vert x_E, x_S)$, we obtain $\mu_{x_E}(x_L)=\sum_{x_S}\mu(x_L\vert x_E, x_S)\mu(x_S)$ as desired.

As illustrated above, \autoref{ax:axiom6} identifies intervention beliefs by carrying out two operations: interchanging interventions with conditioning, and eliminating interventions altogether. The first operation holds when we can write $\mu_{x_\pazocal{J}}(x_\pazocal{H}\vert x_\pazocal{K})=\mu(x_\pazocal{H}\vert x_{\pazocal{J}},\:x_\pazocal{K})$. On the left hand side of the equation, $\pazocal{J}$ is a set of intervened variables, whereas on the right hand side $\pazocal{J}$ is not intervened, but we are conditioning on it. The second operation holds when $\mu_{x_\pazocal{J}}(x_\pazocal{H}\vert x_\pazocal{K})=\mu(x_\pazocal{H}\vert x_\pazocal{K})$. In the left hand side of the equation $\pazocal{J}$ is an intervened set of variables, whereas in the right hand side we dropped the intervention altogether. By eliminating the intervention terms, both of these operations allow us to identify intervention beliefs.
\paragraph{Exchanging interventions and conditional probability} The exchange rule provides a graphical test for when interventions and conditional probability can be exchanged; that is, $\mu_{x_\pazocal{J}}(x_\pazocal{H}\vert x_\pazocal{K})=\mu(x_\pazocal{H}\vert x_{\pazocal{J}},\:x_\pazocal{K})$.

\begin{Rule}[Exchange]\label{rule:rule1}
Let $ \{i_1\}, I_0, J, K\subset\pazocal{N}$ be four disjoint sets of variables. Let $G(\overline{\succ})$ be the DM's causal DAG and construct $G_{I_0^{in}, i_1^{out}}(\overline{\succ})$ from $G(\overline{\succ})$ by eliminating all arrows coming into $I_0$ and all arrows coming out of $i_1$. 

If, in $G_{I_0^{in}, i_1^{out}}(\overline{\succ})$, $J$ is independent of $i_1$ conditional on $K\cup I_0$, then $\mu_{x_{I_0}, x_{i_1}}(x_j\vert x_K) = \mu_{x_{I_0}}(x_J\vert x_K, x_{i_1})$.
\end{Rule}

 In \autoref{ex:identification}, eliminating the arrow $E\rightarrow A$ makes $E$ and $L$ independent conditional on $S$, so Rule \ref{rule:rule1} delivers $\mu_{x_E}(x_L\vert x_S)=\mu(x_L\vert x_E, x_S)$ as we illustrated above. 

\paragraph{Eliminating interventions} The second rule of causal calculus provides a graphical test for when interventions can be eliminated; that is, $\mu_{x_\pazocal{J}}(\cdot)=\mu(\cdot)$.

\begin{Rule}[Eliminate]\label{rule:rule2}
Let $ \{i_1\}, I_0, J, K\subset\pazocal{N}$ be four disjoint sets of variables. Let $G(\overline{\succ})$ be the DM's causal DAG. Construct $G_{I_0^{in}, i_1^{in}}(\overline{\succ})$ from $G$ by eliminating all arrows into $I_0$ and $i_1$, and construct $G_{I_0^{in}}(\overline{\succ})$ from $G$ by eliminating all arrows into $I_0$.
\begin{enumerate}
\item Suppose $i_1$ is not an indirect cause of any variable in $K$. If $J$ is independent of $i_1$ conditional on $K\cup I_0$ according to $G_{I_0^{in}, i_1^{in}}(\overline{\succ})$ then $\mu_{x_{I_0}, x_{i_1}}(x_j\vert x_K) = \mu_{x_{I_0}}(x_J\vert x_K)$.
\item  Suppose $i_1$ is an indirect cause of some variable in $K$. If $J$ is independent of $i_1$ conditional on $K\cup I_0$ according to $G_{I_0^{in}}(\overline{\succ})$ then $\mu_{x_{I_0}, x_{i_1}}(x_j\vert x_K) = \mu_{x_{I_0}}(x_J\vert x_K)$.
\end{enumerate} 
\end{Rule}

  In \autoref{ex:identification}, $E$ is not an indirect cause of $S$. We can therefore eliminate the arrow $S\rightarrow E$. Once we do this, $E$ and $S$ are independent, so Rule \ref{rule:rule2} delivers $\mu_{x_E}(x_S)=\mu(x_S)$, as illustrated above. 

\autoref{th:newtheorem2} connects the Rules \ref{rule:rule1} and \ref{rule:rule2} to \autoref{ax:axiom6}. As mentioned before, this shows us the added necessary and sufficient conditions required for a causal DAG to be useful in the identification of intervention probabilities.\footnote{The reader familiar with Pearl's work may recall the existence of a third rule of causal calculus. However, it can be shown that Rules \ref{rule:rule1} and \ref{rule:rule2} imply the remaining one, and so we omit it from the discussion.}

\begin{Theorem}\label{th:newtheorem2}
Let $\overline{\succ}$ satisfy \autoref{ass:ass1}, and Axioms \ref{ax:axiom3} through \ref{ax:axiom2}. The following are equivalent:
\begin{enumerate}[label=(\roman*)]
\item \autoref{ax:axiom6} holds, and
\item Rules \ref{rule:rule1} and \ref{rule:rule2} hold.
\end{enumerate}
\end{Theorem}

\paragraph{Recovering $\overline{\succ}$ from $\succ$} To recover the DM's preferences, $\overline{\succ}$, when we only observe their non-intervention preferences, $\succ$, we need additional information: the direction of causality and whether or not Rules \ref{rule:rule1} and \ref{rule:rule2} hold. \autoref{th:newtheorem2} complements \autoref{th:theorem1} in recovering $\overline{\succ}$ from $\succ$. \autoref{th:theorem1} says we need to complement our observations of $\succ$ with an assumption on the representing causal DAG, $G$. \autoref{th:newtheorem2}  says we need to complement the information on $\succ$ and $G$ with the assumption that Rules \ref{rule:rule1} and \ref{rule:rule2} hold. We elaborate below.

\indent First, not observing $\overline{\succ}$ means we lose information on the direction of causality. From $\succ$ in the Savage framework we obtain the DM's probabilistic beliefs, $\mu$, representable by some collection of DAGs. These DAGs only encode information on the conditional independence structure of $\mu$. However, conditional independence is a symmetric relation--$i$ is independent of $j$ if and only if $j$ is independent of $i$--while Bayesian causality is not. \autoref{th:theorem1} shows that Axioms \ref{ax:axiom3} through \ref{ax:axiom2} are necessary and sufficient to recover the direction of causality. Consequently, if we complement our observation of $\succ$ with a specific choice for a representing DAG, we partially recover $(\succ_p)_{p\in\pazocal{P}}$.

\indent Second, not observing $\overline{\succ}$ means we lose information on the intervention beliefs, $(\mu_p)_{p\in\pazocal{P}}$. Suppose we complement our observation of $\succ$ with a choice of causal DAG, $G$. We then know that the DM's preference, $(\succ_p)_{p\in\pazocal{P}}$, is compatible with Axioms \ref{ax:axiom3} through \ref{ax:axiom2}. However, \autoref{th:newtheorem2} implies this is not enough to recover the DM's intervention beliefs, $(\mu_p)_{p\in\pazocal{P}}$. Suppose then that we are willing to assume that Rules \ref{rule:rule1} and \ref{rule:rule2} hold. Then we can express the DMs intervention beliefs, $(\mu_p)_{p\in\pazocal{P}}$, in terms of $\mu$. Because $\mu$ is information we have, because of $\succ$, we can now recover the DM's full preference structure, $\overline{\succ}$.

\paragraph{\autoref{th:newtheorem2} and causal inference} Relative to the literature initiated by Pearl's work, we provide necessary and sufficient conditions for intervention beliefs--consequently, causality--to be identified. Indeed, as seen in  \autoref{th:theorem1}, causal DAGs are well-defined representations of causal models, but \autoref{th:newtheorem2} implies that intervention beliefs need not be identified. Only under \autoref{ax:axiom6} can we identify intervention beliefs, and they are identified via Rules \ref{rule:rule1} and \ref{rule:rule2}. 

\indent In the causal inference literature, Pearl proves analogous version of Rules \ref{rule:rule1} and \ref{rule:rule2} using a formalism he called \emph{do-probability}. A contribution of \autoref{th:newtheorem2} is that we derive Rules \ref{rule:rule1} and \ref{rule:rule2} directly from an axiom on behavior, rather than the do-probability formalism. In Appendix \ref{app:conections} we show that do-probabilities and intervention beliefs coincide if, and only if, Axiom \ref{ax:axiom6} holds. Consequently, Pearl's rules of causal calculus and Rules \ref{rule:rule1} and \ref{rule:rule2} coincide and they identify Bayesian causality if, and only if, Axiom \ref{ax:axiom6} holds.

\paragraph{Rules \ref{rule:rule1} and \ref{rule:rule2} and truncations of the DM's causal DAG} Finally, it is natural to wonder why Rules \ref{rule:rule1} and \ref{rule:rule2} require truncating the DM's causal DAG, $G(\overline{\succ})$, and why those specific truncations are the truncations that yield the result in \autoref{th:newtheorem2}. Appendix \ref{app:truncations} provides an intuitive discussion on why these specific truncations of the DM's causal DAG are relevant when carrying out causal identification. 

\section{Conclusions}

\indent In this paper we complement \citeauthor{savage1972foundations}'s choice over acts framework to incorporate choice over both \emph{policy interventions} and acts defined on the non-intervened variables. In this way, we can compare a DM's conditional preferences with their intervention preferences, thereby distinguishing a subjective belief that two variables are correlated from a subjective belief that one variable causes another. 

\indent Equipped with this extended choice domain, the paper makes three contributions. First, we provide an explicit definition of what it means for a DM to behave as if one variable causes another--which we denote \emph{Bayesian causality}. This definition formally captures a prevalent idea in applied economic literature: the difference between correlation and causation is that the latter is relevant for policy recommendations, whereas the former is not. Second, we provide an axiomatic characterization of the DM's causal model. The representation is a DAG representation, which connects our model to a wider literature on causal inference. Furthermore, in our decision theoretic framework, we show that  representing a causal model via a DAG is equivalent to defining causality as Bayesian causality. Lastly, we show additional axioms under which intervention beliefs--therefore, causal effects--are identified from probabilistic (in)dependence. The identification results, together with the DAG representation of the DM's causal model, provide a decision theoretic foundation for models such as \cite{pearl1995causal}. Furthermore, our characterization separates two important components of graphical methods for causal inference: the representation of causality via a DAG, and the identification methodology used to elicit causality from probabilistic (in)dependence. 

\indent From an applied perspective, our paper provides tools for carrying out causal inference. Indeed, the DM in our model may be thought of as an econometrician carrying out an empirical study, and the econometrician's subjective beliefs are those they arrive at by analyzing the available data. To the extent the econometrician finds our axioms normatively compelling, our paper suggests that \citeauthor{pearl1995causal}'s model of causal inference is a faithful representation of an econometrician's subjective causal model.

%
%

\bibliographystyle{ecta}
\bibliography{causality}

\begin{thebibliography}{14}
\newcommand{\enquote}[1]{``#1''}
\expandafter\ifx\csname natexlab\endcsname\relax\def\natexlab#1{#1}\fi

\bibitem[\protect\citeauthoryear{Dawid}{Dawid}{1979}]{dawid1979conditional}
\textsc{Dawid, A.~P.} (1979): \enquote{Conditional independence in statistical
  theory,} \emph{Journal of the Royal Statistical Society. Series B
  (Methodological)}, 1--31.

\bibitem[\protect\citeauthoryear{Fishburn}{Fishburn}{1967}]{fishburn1967preference}
\textsc{Fishburn, P.~C.} (1967): \enquote{Preference-based definitions of
  subjective probability,} \emph{The Annals of Mathematical Statistics}, 38,
  1605--1617.

\bibitem[\protect\citeauthoryear{Geiger, Verma, and Pearl}{Geiger
  et~al.}{1990}]{geiger1990identifying}
\textsc{Geiger, D., T.~Verma, and J.~Pearl} (1990): \enquote{Identifying
  independence in Bayesian networks,} \emph{Networks}, 20, 507--534.

\bibitem[\protect\citeauthoryear{Gul}{Gul}{1992}]{Gul91}
\textsc{Gul, F.} (1992): \enquote{Savage's theorem with a finite number of
  states,} \emph{Journal of Economic Theory}.

\bibitem[\protect\citeauthoryear{Karni}{Karni}{2006}]{karni2006subjective}
\textsc{Karni, E.} (2006): \enquote{Subjective expected utility theory without
  states of the world,} \emph{Journal of Mathematical Economics}, 42, 325--342.

\bibitem[\protect\citeauthoryear{Karni}{Karni}{2017}]{karni2017states}
---\hspace{-.1pt}---\hspace{-.1pt}--- (2017): \enquote{States of nature and the
  nature of states,} \emph{Economics \& Philosophy}, 33, 73--90.

\bibitem[\protect\citeauthoryear{Lauritzen, Dawid, Larsen, and
  Leimer}{Lauritzen et~al.}{1990}]{lauritzen1990independence}
\textsc{Lauritzen, S.~L., A.~P. Dawid, B.~N. Larsen, and H.-G. Leimer} (1990):
  \enquote{Independence properties of directed Markov fields,} \emph{Networks},
  20, 491--505.

\bibitem[\protect\citeauthoryear{Pearl}{Pearl}{1995}]{pearl1995causal}
\textsc{Pearl, J.} (1995): \enquote{Causal diagrams for empirical research,}
  \emph{Biometrika}, 82, 669--688.

\bibitem[\protect\citeauthoryear{Savage}{Savage}{1972}]{savage1972foundations}
\textsc{Savage, L.~J.} (1972): \emph{The foundations of statistics}, Courier
  Corporation.

\bibitem[\protect\citeauthoryear{Sebastiani, Ramoni, Nolan, Baldwin, and
  Steinberg}{Sebastiani et~al.}{2005}]{sebastiani2005genetic}
\textsc{Sebastiani, P., M.~F. Ramoni, V.~Nolan, C.~T. Baldwin, and M.~H.
  Steinberg} (2005): \enquote{Genetic dissection and prognostic modeling of
  overt stroke in sickle cell anemia,} \emph{Nature genetics}, 37, 435--440.

\bibitem[\protect\citeauthoryear{Spiegler}{Spiegler}{2016}]{spiegler2016bayesian}
\textsc{Spiegler, R.} (2016): \enquote{Bayesian networks and boundedly rational
  expectations,} \emph{The Quarterly Journal of Economics}, 131, 1243--1290.

\bibitem[\protect\citeauthoryear{Spiegler}{Spiegler}{2017}]{spiegler2017data}
---\hspace{-.1pt}---\hspace{-.1pt}--- (2017): \enquote{Data Monkeys: A
  Procedural Model of Extrapolation from Partial Statistics,} \emph{The Review
  of Economic Studies}, 84, 1818--1841.

\bibitem[\protect\citeauthoryear{Spiegler}{Spiegler}{2018}]{spiegler2018can}
---\hspace{-.1pt}---\hspace{-.1pt}--- (2018): \enquote{Can Agents with Causal
  Misperceptions be Systematically Fooled?} \emph{Journal of the European
  Economic Association}.

\bibitem[\protect\citeauthoryear{Spirtes, Glymour, Scheines, Heckerman, Meek,
  Cooper, and Richardson}{Spirtes et~al.}{2000}]{spirtes2000causation}
\textsc{Spirtes, P., C.~N. Glymour, R.~Scheines, D.~Heckerman, C.~Meek,
  G.~Cooper, and T.~Richardson} (2000): \emph{Causation, prediction, and
  search}, MIT press.

\end{thebibliography}

\appendix
\section{Omitted Proofs}\label{App:profs}
\indent In this appendix we provide the proofs for Theorems \ref{th:theorem1} and \ref{th:newtheorem2}. Both these proofs rely on the notion of d-separation, which in turn depends on whether specific paths in a DAG has colliders. Therefore, we begin by introducing the definitions of colliders and d-separation, which we then use to prove the theorems.

\subsection{General graph-theoretic definitions}

\begin{Definition}[Undirected paths]\label{def:colliders}
Let $G=(V,E)$ be a DAG. An \emph{undirected path in $G$} is a tuple $p=(v_1,...,v_N)$ where $v_n\in V$ for each $n$ and either $(v_n,v_{n+1})\in E$--that is, $v_n\rightarrow v_{n+1}$--or   $(v_{n+1},v_{n})\in E$--that is, $v_n\leftarrow v_{n+1}$. 
\end{Definition}

\begin{Definition}[Colliders and tail-to-tail nodes]\label{def:colliders}
Let $G=(V,E)$ be a DAG, and $p$ be an undirected path in $G$. We say that $v^*$ is a \emph{collider} in $p$ if for some $n$ we have $(v_{n-1},v_n,v_{n+1})$ where $v_n=v^*$, $v_{n-1}\rightarrow v^* \leftarrow v_{n+1}$. We say $v^*$ is a \emph{tail-to-tail} node if $v_{n-1}\leftarrow v^* \rightarrow v_{n+1}$.
\end{Definition}

\begin{Definition}[d-separation]\label{def:dseparation}
Let $\pazocal{I},\pazocal{J},\pazocal{K}\subset \pazocal{N}$ be three disjoint sets of variables. We say that $\pazocal{K}$ d-
separates $\pazocal{I}$ from $\pazocal{J}$ if for each undirected path between a variable in $\pazocal{I}$ and a variable in $\pazocal{J}$, one of the following properties holds:
\begin{enumerate}
\item There is a node $w$ along the path such that $w$ is a collider such that $w\notin \pazocal{K}$ and $\pazocal{K}\subset ND(w)$.
\item There is a node $w$ along the path such that $w$ is not a collider and such that $w\in \pazocal{K}$.
\end{enumerate}
If either of the above properties hold, we say $w$ \emph{blocks} $p$, of that $p$ is \emph{blocked by} $w$.
\end{Definition}

It is known from the conditional independence literature that if $\pazocal{K}$ d-separates $\pazocal{I}$ from $\pazocal{J}$  then $\pazocal{I}$ is independent of $\pazocal{J}$ conditional on $\pazocal{K}$ (see, e.g., \citealp{dawid1979conditional,pearl1995causal,lauritzen1990independence}). We use this fact to prove Theorem \ref{th:theorem1}.

\subsection{Proof of Theorem \ref{th:theorem1}}
\begin{lemma}\label{lemma:lemmadseparation1}
Fix $\pazocal{K}\subset\pazocal{N}$ and $x_\pazocal{K}\in X_{\pazocal{K}}$. Let $G_{\pazocal{K}}$ represent $\succ_{x_\pazocal{K}}$. For each $i\in\pazocal{N}$, $Ca(i)\setminus\pazocal{K}$ d-separates $\{i\}$ from $ND(i)\setminus\pazocal{K}\equiv\{\hat{j}\in\pazocal{K}^\complement:\text{ $i$ is not an indirect cause of $\hat{j}$}\}$.
\end{lemma}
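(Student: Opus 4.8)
The plan is to reduce this to the classical \emph{directed local Markov property}: in any DAG, the parent set of a node d-separates that node from its nondescendants. Note that the lemma is a purely graph-theoretic statement about the fixed truncated graph $G_\mathcal{K}$, so no probabilistic reasoning enters here; the hypothesis that $G_\mathcal{K}$ represents $\succ_{x_\mathcal{K}}$ serves only to pin down the graph. Writing $G$ for the representing DAG, we have $G=G(\bar\succ)$ by Theorem \ref{th:theorem1} and $Pa_G(i)=Ca(i)$ by Definition \ref{def:causaldag}; moreover $G$ is acyclic (it is a DAG, equivalently by Axiom \ref{ax:axiom2}). The truncation $G_\mathcal{K}$ is then again a DAG, its parent set at $i$ is $\mathcal{S}:=Pa_{G_\mathcal{K}}(i)=Ca(i)\setminus\mathcal{K}$, and every edge of $G_\mathcal{K}$ is an edge of $G$. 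First I would record that a directed $i$-to-$\hat j$ path inside $G_\mathcal{K}$ is also one in $G$; hence any $\hat j\in\mathcal{K}^\complement$ that $i$ does not indirectly cause is a nondescendant of $i$ in $G_\mathcal{K}$. After removing the elements of $\mathcal{S}$ (which must be excluded, since one cannot d-separate $i$ from its own parents), it suffices to show that $\mathcal{S}$ d-separates $\{i\}$ from $ND_{G_\mathcal{K}}(i)\setminus\mathcal{S}$; d-separation passes to subsets, so the (possibly smaller) set in the statement follows.

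For the local Markov property itself I would take an arbitrary undirected path $\pi$ in $G_\mathcal{K}$ from $i$ to a nondescendant $\hat j\notin\mathcal{S}$ and show $\pi$ is blocked in the sense of Definition \ref{def:dseparation}, splitting on the orientation of the first edge. If the first edge points into $i$, its other endpoint $w$ lies in $Pa_{G_\mathcal{K}}(i)=\mathcal{S}$ and is a non-collider on $\pi$ (it carries an outgoing edge towards $i$), so the second bullet of Definition \ref{def:dseparation} blocks $\pi$. If the first edge points out of $i$, I would follow the maximal directed (forward) segment $i\to u_1\to\cdots\to u_r$; each $u_t$ is a strict descendant of $i$. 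Since $\hat j$ is a nondescendant, this segment cannot exhaust $\pi$, so the path must reverse at $u_r$, making $u_r$ a collider.

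The crux—and the place where acyclicity does the work—is verifying that this collider $u_r$ is \emph{inactive}, i.e. $u_r\notin\mathcal{S}$ and $\mathcal{S}\subseteq ND(u_r)$. Here I would use that $u_r$ is a strict descendant of $i$ in $G$: if either $u_r$ or a descendant of $u_r$ belonged to $\mathcal{S}\subseteq Pa_G(i)$, that node would be simultaneously a descendant and a parent of $i$, producing a directed cycle in $G$ and contradicting that $G$ is a DAG. Hence the collider condition in the first bullet of Definition \ref{def:dseparation} holds and $\pi$ is blocked. As every path is blocked, $\mathcal{S}$ d-separates $\{i\}$ from $ND_{G_\mathcal{K}}(i)\setminus\mathcal{S}$, and a fortiori from the set in the statement.

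I expect the main obstacle to be the forward-edge case and, within it, the collider analysis: one must argue carefully that the path's first reversal produces a collider none of whose descendants can lie in the conditioning set $\mathcal{S}$, which is exactly where acyclicity (Axiom \ref{ax:axiom2}) is indispensable. A secondary, bookkeeping point is the mismatch between descendants in $G$ and in $G_\mathcal{K}$ together with the disjointness requirement in Definition \ref{def:dseparation}: the separated set must be taken disjoint from $\mathcal{S}=Ca(i)\setminus\mathcal{K}$, so I would phrase the proof for $ND_{G_\mathcal{K}}(i)\setminus\mathcal{S}$ and observe that the displayed set differs from it only by parents of $i$, which are excluded at no cost.
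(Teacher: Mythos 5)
Your proposal is correct and follows essentially the same route as the paper's proof: case-split on whether the path meets $i$ through a parent (then that parent is a non-collider lying in $Ca(i)\setminus\mathcal{K}$ and blocks) or through a child (then the first reversal of the maximal outgoing directed segment is a collider that is a descendant of $i$, so by acyclicity neither it nor its descendants can lie in $Ca(i)\setminus\mathcal{K}$). The only difference is organizational—you anchor the analysis at $i$'s first edge while the paper anchors it at the last collider/tail-to-tail node of the trail—plus your explicit handling of the disjointness bookkeeping, which the paper leaves implicit.
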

\begin{proof}
Choose $j\in\{\hat{j}\in\pazocal{K}^\complement:\text{ $i$ is not an indirect cause of $\hat{j}$}\}$.
Choose an undirected path $p$ from $j$ to $i$.
That is, $p=(i_0,...,i_N)$ where $i_0=j$, $i_N=i$, and, for each $n\in\{1,...,N\}$, either $(i_{n-1},i_n)\in E$ or $(i_n,i_{n-1})\in E$.
First, since $i$ is not an indirect cause of $j$, then $p$ cannot be a directed path from $i$ to $j$. That is, $p$ cannot be such that $(i_{n},i_{n-1})\in E$ for each $n$.
Second, if $p$ is a directed path from $j$ to $i$ (that is, $(i_{n-1},i_n)\in E$ for each $n$), then $p$ is blocked by $i_{N-1}\in Ca(i)\setminus
\pazocal{K}$.
Third, assume that $p$ is not directed in any direction.
Then, $p$ has colliders and/or tail-to-tail nodes.
Let $i_n$ be the last node that is either a collider or a tail-to-tail node.
Let $q=(i_n,...,i_N)$ be the path starting at $i_n$.
By the definition of $i_n$, $q$ must be directed.
Assume that $q$ is directed from $i_n$ to $i$.
Then, $i_n$ is tail-to-tail.
Then, $p$ is blocked by $i_{N-1}$.
Finally, assume that $q$ is directed from $i$ to $i_n$.
Then, $i_n$ is a collider.
If $i_n\in Ca(i)\setminus\pazocal{K}$, then $(i_n,i, q)$ is a cycle.
Thus, $i_n\notin Ca(i)\setminus\pazocal{K}$.
By a similar argument, no descendant of $i_n$ can be in $Ca(i)\setminus\pazocal{K}$.
Therefore, $i_n$ blocks $p$.
Since each path joining $j$ to $i$ is blocked, this concludes the proof.
\end{proof}

\begin{lemma}\label{lemma:ax1}
Assume $G(\overline{\succ})$ is a DAG that represents $\overline{\succ}$. Then \autoref{ax:axiom3} holds.
\end{lemma}
\begin{proof}
This follows from the minimality requirement in \autoref{def:representation0}. 
Pick $i\in\pazocal{N}$.
Pick any $\pazocal{H}$, $\pazocal{J}$ as in \autoref{ax:axiom3}.
Because $G(\overline{\succ})$ represents $\succ$, then $\mu_{x_{\pazocal{H}}}(x_{-\pazocal{H}})=\Pi_{j\notin\pazocal{H}} \mu_{x_{-\pazocal{H}}}(x_j\vert x_{Ca(j)})$, and this representation is minimal.
That is, if $B\subsetneq Ca(i)$, then $\mu_{x_{\pazocal{H}}}(x_{-\pazocal{H}})\neq\Pi_{j\neq i} \mu_{x_{-\pazocal{H}}}(x_j\vert x_{Ca(j)})\mu_{x_{\pazocal{H}}}(x_i\vert B)$. 
Suppose, contrary to \autoref{ax:axiom3}, that $i\perp_\pazocal{H}(Ca(i)\setminus(\pazocal{J}\cup\pazocal{H}))\vert\pazocal{J}$.
Then, there is a value $x_{\pazocal{H}}\in X_{\pazocal{H}}$  such that
$\mu_{x_{\pazocal{H}}}(x_i\vert x_{Ca(i)\setminus\pazocal{H}})=\mu_{x_{\pazocal{H}}}(x_i\vert x_{\pazocal{J}})$, which contradicts the minimality requirement.
\end{proof}

\begin{lemma}\label{lemma:axsecret}
Assume $G(\overline{\succ})$ is a DAG that represents $\overline{\succ}$. Then \autoref{ax:secret} holds.
\end{lemma}
\begin{proof}
Suppose that $i$, $\pazocal{J}$, and $\pazocal{K}$ are as in \autoref{ax:secret}.
Because $j\in ND(i)$ for each $j\in\pazocal{J}\setminus\pazocal{K}$, \autoref{lemma:lemmadseparation1} implies that $i\perp_{\pazocal{K}}\pazocal{J}\setminus\pazocal{K}\vert Ca(i)\setminus\pazocal{K}$.
Alternatively, the factorization formula $\mu_{x_{\pazocal{K}}}(x_{-\pazocal{K}})=\Pi\mu_{x_{\pazocal{K}}}(x_i\vert Ca(i)\setminus\pazocal{K})$ implies $i$ is independent of its non-descendants given $i$'s causes. Then, it is independent of any subset $\pazocal{J}$ of its non-descendants. 
\end{proof}

\begin{lemma}\label{lemma:ax2}
Assume $G(\overline{\succ})$ is a DAG that represents $\overline{\succ}$. Then \autoref{ax:axiom4} holds.
\end{lemma}
\begin{proof}
Take $i$ and $j$ such that $i\notin Ca(j)$ and $j\notin Ca(i)$.
Let $\pazocal{K}\subset\pazocal{N}\setminus \{i,j\}$.
We prove the lemma by showing that $Ca(i)\cup Ca(j)\setminus\pazocal{K}$ blocks all paths from $i$ to $j$ in $G_{\pazocal{K}}$.
Let $p$ is an undirected path from $i$ to $j$ that is not blocked, and enumerate $p=(i,i_0,..., i_T,j)$.
Because $p$ is not blocked, $i_0\notin Ca(i)$ and $i_T\notin Ca(j)$.
Indeed, if either $i_0\in Ca(i)$ or $i_T\in Ca(j)$, then either $i_0$ is not a collider or $i_T$ is not a collider, thus implying that $p$ is blocked by $Ca(i)\cup Ca(j)$.
Therefore, $p$ has a collider, because $i\rightarrow i_0$ and $i_T\leftarrow j$.
Let $n$ be the smallest number such that $i_n$ is a collider and $m$ be the largest number such that $i_m$ is a collider (possibly $n=m$).
Then, the path $(i,i_0,...,i_n)$ is a directed path from $i$ to $i_n$ so $i_n\notin Ca(i)$ (analogously, $i_m\notin Ca(j)$ because the path $(j,i_T,..,i_m)$ is directed) by acyclicity of $G(\overline{\succ})$. 
Because of this, and because $p$ is not blocked, the following must be true: (i) $i_n\in Ca(j)$, and (ii) $i_m\in Ca(i)$. Then, the directed path that goes from $i$ to $i_n$, jumps to $j$, returns to $i_m$, and skips back to $i$ is a cycle.\footnote{Formally, this is the path $q=(i, i_0,... i_n, j, i_T, i_{T-1}, ..., i_m, i)$}
This constitutes a contradiction.
Thus, every path $p$ from $i$ to $j$ is blocked by $Ca(i)\cup Ca(j)$.
Thus, \autoref{ax:axiom4} holds.
\end{proof}

\begin{lemma}\label{lemma:ax3}
Assume $G(\overline{\succ})$ is a DAG that represents $\overline{\succ}$. Then \autoref{ax:axiom0} holds.
\end{lemma}
\begin{proof}
\autoref{ax:axiom0} is a direct consequence of item (ii) in \autoref{def:representation2}.
Indeed, let $i,j\in\pazocal{N}$ and let $f\in\mathbb{R}^{X_i}$.
Then, $f\succ_{x_{-\{i,j\}}} g\Leftrightarrow \sum u(f(x_i))\mu_{x_{-\{i,j\}}}(x_i)> \sum u(g(x_i))\mu_{x_{-\{i,j\}}}(x_i)$.
Because of item (ii) in \autoref{def:representation2}, we get  $\mu_{x_{-\{i,j\}}}(x_i)=\mu_{x_{-\{i\}}}(x_i\vert x_j)$.
Therefore, $\sum u(f(x_i))\mu_{x_{-\{i,j\}}}(x_i)> \sum u(g(x_i))\mu_{x_{-\{i,j\}}}(x_i)\Leftrightarrow \sum u(f(x_i))\mu_{x_{-\{i\}}}(x_i\vert x_j)> \sum u(g(x_i))\mu_{x_{-\{i,j\}}}(x_i\vert x_j)$.
Because $\sum u(f(x_i))\mu_{x_{-\{i\}}}(x_i\vert x_j)> \sum u(g(x_i))\mu_{x_{-\{i,j\}}}(x_i\vert x_j) \Leftrightarrow  \mathbb{1}_{x_j}f\succ_{x_{-\{i,j\}}} \mathbb{1}_{x_j}g$ we obtain $f\succ_{x_{-\{i,j\}}} g\Leftrightarrow \mathbb{1}_{x_j}f\succ_{x_{-\{i,j\}}} \mathbb{1}_{x_j}g$, so \autoref{ax:axiom0} is satisfied.
\end{proof}

\begin{lemma}\label{lemma:ax4}
Assume $G(\overline{\succ})$ is a DAG that represents $\overline{\succ}$. Then \autoref{ax:axiom2} holds.
\end{lemma}
\begin{proof}
By contradiction, suppose that \autoref{ax:axiom2} did not hold.
Then, there would exist $i$ and a sequence $(i,i_1,...,i_T,i)$ such that $i\in Ca(i_1)$, for all $t\in\{1,...,T-1\}$, $i_t\in Ca(i_{t+1})$, and $i_T\in Ca(i)$.
Thus, $((i,i_1), ..., (i_{t-1},i_t), ..., (i_T,i))$ is a cycle in $G(\overline{\succ})$, which is a contradiction because $G(\overline{\succ})$ is acyclic.
\end{proof}

For the next lemma we use the following notation. Given a directed path, $p=(p_1,...,p_{L+1} )$, between two variables, $p_1$ and $p_{L+1}$, let $\#p$ be the length of path $p$. That is,  $\#p\equiv L$. Given a variable $t\in \pazocal{N}$ let $O(t)=\max\{\#p: \text{$p$ is a directed path from $v$ to $t$ and $Ca(v)=\emptyset$}\}$. That is, $O(t)$ is the longest length of a path from a primitive, $v$, to the variable $t$. By convention, if $Ca(t)=\emptyset$ we let $O(t)=0$. We refer to $O(t)$ as the \emph{order of} $t$.

\begin{lemma}\label{lemma:markov}
Assume that Axioms \ref{ax:secret} and \ref{ax:axiom4} hold. Then, $(\forall \pazocal{K}\subset\pazocal{N})$, $(\forall i\notin\pazocal{K}$, $(\forall j\notin\pazocal{K}\cup Ca(i)\text{ such that }i\notin ICa(j))$ ,
\begin{eqnarray*}
i\perp_{\pazocal{K}} j \vert (Ca(i)\setminus\pazocal{K}).
\end{eqnarray*}
\end{lemma}
\begin{proof}
Fix $\pazocal{K}\subset\pazocal{N}$ and choose an arbitrary $i\notin\pazocal{K}$.
We proceed by complete induction on the order of $j$: for all $M$, if $j\notin\pazocal{K}\cup Ca(i)$, $O(j)=M$, and $i\notin ICa(j)$, then $i\perp_{\pazocal{K}} j \vert (Ca(i)\setminus\pazocal{K})$.
First, consider any $j\notin\pazocal{K}\cup Ca(i)$ with $M=0$. 
\autoref{ax:axiom4} implies $i\perp_{\pazocal{K}} j\vert (Ca(i)\cup Ca(j))\setminus\pazocal{K}$.
Because $O(j)=0$ then $Ca(j)=\emptyset$ and so $i\perp_{\pazocal{K}} j\vert Ca(i)\setminus\pazocal{K}$ as desired.
Now, assume the inductive hypothesis is true for all $m\leq M$, and let $j\notin\pazocal{K}\cup Ca(i)$ be of order $M+1$.
Then, \autoref{ax:axiom4} implies $i\perp_{\pazocal{K}}j\vert (Ca(i)\cup Ca(j))\setminus \pazocal{K} $.
Furthermore, for each $w\in Ca(j)\setminus (Ca(i)\cup\pazocal{K})$, $O(w)\leq M$.\footnote{Note that $i\notin ICa(w)$. Because $w\in Ca(j)$, if $i$ was an indirect cause of $w$ then $i$ would be an indirect cause of $j$, a contradiction. }
Thus, by the inductive hypothesis, $i\perp_{\pazocal{K}}w\vert Ca(i)\setminus\pazocal{K}$.
By \autoref{ax:secret}, because $i\perp_{\pazocal{K}}w\vert Ca(i)\setminus\pazocal{K}$ for each $w\in Ca(j)\setminus (Ca(i)\cup\pazocal{K})$ then $i \perp_{\pazocal{K}} (Ca(j)\setminus (Ca(i)\cup\pazocal{K})) \vert Ca(i)\setminus\pazocal{K}$.
Because we also have  $i\perp_{\pazocal{K}}j\vert (Ca(i)\cup Ca(j))\setminus\pazocal{K}$, then $i\perp_{\pazocal{K}} j \vert (Ca(i)\setminus\pazocal{K})$.
By induction, for all $j\notin\pazocal{K}\cup Ca(i)$, if $i\notin ICa(j)$ then $i\perp_{\pazocal{K}} j \vert (Ca(i)\setminus\pazocal{K})$.
\end{proof}

\begin{Corollary}\label{coro:markov}
As a corollary of \autoref{lemma:markov}, we obtain the Markov property in \cite{lauritzen1990independence}, $i\perp_{\pazocal{K}} ND(i)\setminus Ca(i)\vert Ca(i)\setminus\pazocal{K}$.
Indeed, \autoref{lemma:markov} delivers $i\perp_{\pazocal{K}}j\vert Ca(i)\setminus\pazocal{K}$ for each $j\in ND(i)\setminus Ca(i)$. Applying \autoref{ax:secret} gives $i\perp_{\pazocal{K}} (ND(i)\setminus Ca(i))\vert Ca(i)\setminus\pazocal{K}$.
\end{Corollary}

\begin{lemma}\label{lemma:theorem1}
Suppose $\overline{\succ}$ satisfies \autoref{ass:ass1}. If $\;\;\overline{\succ}$ satisfies \autoref{ax:axiom3} through \ref{ax:axiom2}, $G(\overline\succ)$ represents $\overline{\succ}$.
\end{lemma}
\begin{proof}
Construct $G(\overline{\succ})$ by setting $Pa(i)=Ca(i)$.
By \autoref{ax:axiom2}, $G(\overline{\succ})$ is acyclic.
Indeed, if for some length $k\in\mathbb{N}$, there were a cycle $e=((i_1,i_2), (i_2,i_3), ..., (i_k,i_1))$, then $i_1$ would be an indirect cause of itself.
Choose any set $\pazocal{K}\subset\pazocal{N}$ and any realization $x_{\pazocal{K}}\in X_{\pazocal{K}}$.
We need to show that $\mu_{x_{\pazocal{K}}}(x_{-\pazocal{K}})=\Pi_{i\notin\pazocal{K}}\mu_{x_{\pazocal{K}}}(x_i\vert Ca(i)\setminus\pazocal{K})$.
By \autoref{lemma:markov} and Corollary \ref{coro:markov}, $\mu_{x_{\pazocal{K}}}$ satisfies the directed local Markov property in \cite{lauritzen1990independence}.
Therefore, $\mu_{x_{\pazocal{K}}}(x_{-\pazocal{K}})=\Pi_{i\notin\pazocal{K}}\mu_{x_{\pazocal{K}}}(x_i\vert Ca(i)\setminus\pazocal{K})$.\footnote{For completeness, we provide a proof sketch of  \cite{lauritzen1990independence} 's argument. Without loss of generality, label variables in $\pazocal{N}\setminus\pazocal{K}$ so that $j\leq i$ implies $i$ is not an indirect cause of $j$; this can be done by ranking variables according to their order in the graph, $O(\cdot)$. By our enumeration, $\{j\notin\pazocal{K}:j<i \}\subset\{j\in\pazocal{N}:ij\in ND(i)\}$. By the chain rule, we get $\mu_{x_{\pazocal{K}}}(x_{-\pazocal{K}})=\Pi_{i\notin\pazocal{K}}\mu_{x_{\pazocal{K}}}(x_i\vert ND(i)\setminus\pazocal{K})$ Because $i$ is independent of its non-descendants given $Ca(i)$ then we can write $\mu_{x_{\pazocal{K}}}(x_i\vert ND(i)\setminus\pazocal{K})=\mu_{x_{\pazocal{K}}}(x_i\vert Ca(i)\setminus\pazocal{K})$ and this completes the argument.}
We now check the minimality of $Ca(i)$.
Assume that $\pazocal{J}\subsetneq Ca(i)$.
\autoref{ax:axiom3} states that $i\not\perp_\pazocal{K} \pazocal{J}\vert Ca(i)\setminus(\pazocal{K}\cup\pazocal{J}))$.
Thus, the factorization formula $\mu_{x_{\pazocal{K}}}(x_{-\pazocal{K}})=\Pi_{i\notin\pazocal{K}}\mu_{x_{\pazocal{K}}}(x_i\vert Ca(i)\setminus\pazocal{K})$ is minimal.
\end{proof}
\begin{lemma}\label{lemma:unique}
Let $\overline{\succ}$ be the DM's preferences, and let $G(\overline{\succ})=(\pazocal{N},E)$ be the directed graph defined by setting $Pa(i)=Ca(i)$ for each $i\in\pazocal{I}$. If $\overline{\succ}$ satisfies \autoref{ass:ass1}, and \autoref{ax:axiom0}, then the following are true:
\begin{enumerate}
\item If $G=(\pazocal{N},F)$ is a directed graph that represents $\overline{\succ}$, then $(j,i)\in F \Rightarrow j\in Ca(i)$.
\item If $G=(\pazocal{N},F)$ is a directed graph that represents $\overline{\succ}$, then $j\in Ca(i)\Rightarrow (j,i)\in F$ or $i\in Ca(j)$.
\end{enumerate}
\end{lemma}
\begin{proof}
Let $\overline{\succ}$ be as in the statement of the lemma, $G(\overline{\succ})$ be the directed graph defined by setting $Pa(i)=Ca(i)$ for each $i\in\pazocal{N}$, and $G=(\pazocal{N},F)$ be any other directed graph that represents $\overline{\succ}$.
For each $\pazocal{I}\subset \pazocal{N}$ and each realization $x_{\pazocal{I}}\in X_{\pazocal{I}}$, let $\mu_{x_{\pazocal{I}}}\in \Delta(X_{-\pazocal{I}}) $ represent beliefs obtained from $\succ_{x_{\pazocal{I}}}$.
Such a representation exists from \autoref{ass:ass1}.
\newline We first show that $j\in Ca(i) \Rightarrow (j,i)\in F$ or $i\in Ca(j)$.
If $j\in Ca(i)$, then the function $T:X_j\rightarrow \mathbb{R}$ defined as $T(x_j)=\mu_{x_{-\{i,j\}}, x_{j}}(x_i)$ is not constant in $x_j$.
Additionally, by \autoref{ax:axiom0}, $\mu_{x_{-\{i,j\}}}(x_i\vert x_j)=T(x_j)$.
Thus, $i$ and $j$ are not independent after intervening on $\{i,j\}^\complement$.
Because $G$ represents $\overline{\succ}$, then $G_{-\{i,j\}}$ represents $\succ_{-\{i,j\}}$.
Thus, either $(i,j)\in F$ or $(j,i)\in F$ (if not, $G_{-\{i,j\}}$ would treat $i$ and $j$ as independent, which is a contradiction).
If $(j,i)\in F$, the proof concludes.
Therefore, let $(j,i)\notin F$ so that $(i,j)\in F$.
Because $G$ represents $\overline{\succ}$, this means that $\mu_{x_{-\{i,j\}}}(x_j\vert x_i)=\mu_{x_{-\{j\}}}(x_j)$.
Because $\mu_{x_{-\{i,j\}}}(x_j\vert x_i)$ is not constant in $i$, this equation implies $i\in Ca(j)$, as desired.
\newline We now show $(j,i)\in F\Rightarrow j\in Ca(i)$.
First, note that for all $x\in X$, $\mu_{x_{-\{i,j\}}}(x_i, x_j)=\mu_{x_{-\{i,j\}}}(x_j)\mu_{x_{-\{i,j\}}}(x_i\vert x_j)$.
Because $G$ represents $\overline{\succ}$, $(j,i)\in F$ and the minimality condition in \autoref{def:representation0}, jointly imply that $i$ and $j$ are not independent after intervening on $\{i,j\}^\complement$.
That is, $\mu_{x_{-\{i,j\}}}(x_i\vert x_j)$ is not constant in $x_j$.
Moreover, because $G$ represents $\overline{\succ}$ and $(j,i)\in F$, we obtain that $\mu_{x_{-\{i,j\}}}(x_i\vert x_j)=\mu_{x_{-\{i\}}}(x_i)$.
Therefore, there is a value of $x_{-\{j\}}$ for which $T(x_j)=\mu_{x_{-\{i\}}}(x_i)$ is not constant in $x_j$.
Therefore, $j\in Ca(i)$.
\end{proof}
\begin{Remark}\label{remark:unique}
We only used \autoref{ax:axiom0} to prove the second statement in \autoref{lemma:unique}. Thus, without \autoref{ax:axiom0}, any representing graph, $F$, must include the causal links in the sense of \autoref{def:causality} (\emph{i.e.}, $(j,i)\in F\Rightarrow j\in Ca(i)$), but $F$ could omit some arrows. However, only arrows involved in 2-cycles are omitted. If $G$ is acyclic then $G=G(\overline{\succ})$.
\end{Remark}

\begin{proof}[Proof of \autoref{th:theorem1}]
That the representation implies the axioms is a direct consequence of Lemmas \ref{lemma:ax1} through \ref{lemma:ax4}.
By \autoref{lemma:theorem1} the axioms imply the representation.
The uniqueness claim follows directly from \autoref{lemma:unique} and Remark \ref{remark:unique}.
\end{proof}

\subsection{Proof of \autoref{th:newtheorem2}}\label{app:theorem2}
\begin{lemma}\label{lemma:rules}
Assume $G(\overline{\succ})$ represents $\overline{\succ}$. If Rules \ref{rule:rule1} and \ref{rule:rule2} hold then \autoref{ax:axiom6} holds.
\end{lemma}
\begin{proof}
Fix any $\pazocal{K}$. Let $i\in\pazocal{N}$, $\pazocal{J}\subset\pazocal{N}_{-i}$.
To show \autoref{ax:axiom6} hold it suffices to show $\mu(x_i\vert Ca(i))=\mu_{x_{\pazocal{J}}}(x_i\vert x_{Ca(i)\setminus\pazocal{J}})$.
Let $\pazocal{J}^*=\pazocal{J}\cap Ca(i)$.
Rule \ref{rule:rule2} implies the following:
\begin{eqnarray}
\mu_{\pazocal{J}^*\cup(\pazocal{J}\setminus Ca(i))}(x_i\vert x_{Ca(i)\setminus \pazocal{J} })=\mu_{\pazocal{J}^*}(x_i\vert x_{Ca(i)\setminus \pazocal{J}}).\label{bla}
\end{eqnarray}
We need to show that for each $j\in\pazocal{J}\setminus Ca(i)$, the set $Ca(i)=(Ca(i)\setminus\pazocal{J})\cup \pazocal{J}^*$ blocks all paths from $i$ to $j$ in the corresponding truncation.
Take any $j\in\pazocal{J}\setminus Ca(i)$. 
First, suppose $j$ is an indirect cause of some $v\in Ca(i)\setminus \pazocal{J}$ then $j$ is an indirect cause of $i$.
Therefore, $j$ is a non-descendant of $i$.
As such, $Ca(i)$ block all paths from $j$ to $i$ and so the second item of Rule \ref{rule:rule2} delivers \autoref{bla}.
Suppose now that $j$ is not an indirect cause of any $v\in Ca(i)\setminus \pazocal{J}^*$.
Then, $j$ is a non-descendant of $i$ in $G_{\{j\}^{in}}(\overline{\succ})$ because there are no arrows into $j$.
Thus, $Ca(i)$ blocks all paths from $i$ to $j$ in $G_{(Ca(i)\setminus\pazocal{J}^*)^{in},\{j\}^{in}}(\overline{\succ})$ and so the first item of Rule \ref{rule:rule2} delivers \autoref{bla}.

Rule \ref{rule:rule1} implies the following:
\begin{eqnarray}
\mu_{\pazocal{J}^*}(x_i\vert x_{Ca(i)\setminus \pazocal{J}})=\mu(x_i\vert x_{(Ca(i)\setminus\pazocal{J})\cup\pazocal{J}^* }), \label{blano}\\
\mu_{\pazocal{J}^*}(x_i\vert x_{Ca(i)\setminus \pazocal{J}})=\mu(x_i\vert x_{Ca(i)}). \label{bla2}
\end{eqnarray}

Indeed, $Ca(i)\setminus \pazocal{J}^*$ d-separates $i$  from $\pazocal{J}^*$ in $G_{(\pazocal{J}^*)^{out}}$ and so the independence clause for Rule \ref{rule:rule1} holds.

Putting together Equations \ref{bla} and \ref{bla2} we obtain, $\mu(x_i\vert x_{Ca(i)})=\mu_{x_{\pazocal{J}}}(x_i\vert x_{Ca(i)\setminus\pazocal{J}})$ as desired.
\end{proof}
The next lemma is important in its own right for two reasons. First, it gives a formal meaning to the informal description of \autoref{th:newtheorem2}: \autoref{do-probability} below shows that intervention beliefs are expressible purely in terms of non-intervention beliefs. In particular, if we know the DM's Savage preferences, $\succ$, and we assume \autoref{ax:axiom6} holds (alternatively, Rules \ref{rule:rule1} and \ref{rule:rule2} hold), we can recover their intervention beliefs. Second, it states that our intervention probabilities coincide with Pearl's do-probabilities (see Appendix \ref{app:conections} for a more detailed discussion of intervention beliefs and do-probabilities).
\begin{lemma}\label{lemma:do-probability}
Suppose \autoref{ax:axiom3} through \ref{ax:axiom6} hold. Then, for all $x\in X$, and all $\pazocal{I}\subset \pazocal{N}$,
\begin{eqnarray}
\mu_{x_{\pazocal{I}}}(x_{\pazocal{N}\setminus \pazocal{I}})=\frac{\mu(x)}{\Pi_{i\in\pazocal{I}} \mu(x_i\vert x_{Ca(i)})}. \label{do-probability}
\end{eqnarray}
\end{lemma}
\begin{proof}
Because Axioms \ref{ax:axiom3} through \ref{ax:axiom2} hold, then $\mu$ can be factorized through the causal graph, $G(\overline{\succ})$.
That is, we get $\mu(x)=\Pi \mu(x_n\vert x_{Ca(n)})$.
Therefore, $\frac{\mu(x)}{\Pi_{i\in\pazocal{I}} \mu(x_i\vert x_{Ca(i)})}=\Pi_{n\notin\pazocal{I}} \mu(x_n\vert x_{Ca(n)})$.
Using \autoref{ax:axiom6} we obtain $\Pi_{n\notin\pazocal{I}} \mu(x_n\vert x_{Ca(n)})=\Pi_{n\notin\pazocal{I}} \mu_{x_i}(x_n\vert x_{Ca(n)\setminus\{i\}})$.
Then, $\frac{\mu(x)}{\Pi_{i\in\pazocal{I}} \mu(x_i\vert x_{Ca(i)})}= \Pi_{n\notin\pazocal{I}} \mu_{x_i}(x_n\vert x_{Ca(n)\setminus\{i\}})$.
Because $G_{\pazocal{I}^{in}}(\overline{\succ})$ represents $\succ_{x_{\pazocal{I}}}$ then  the factorization formula delivers $\Pi_{n\notin\pazocal{I}} \mu_{x_i}(x_n\vert x_{Ca(n)\setminus\{i\}})=\mu_{x_{\pazocal{I}}}(x_{\pazocal{N}\setminus\pazocal{I}})$.
Hence,  $\frac{\mu(x)}{\Pi_{i\in\pazocal{I}} \mu(x_i\vert x_{Ca(i)})}=\mu_{x_{\pazocal{I}}}(x_{\pazocal{N}\setminus\pazocal{I}})$ as desired.
\end{proof}

\begin{proof}[Proof of \autoref{th:newtheorem2}]
That \autoref{ax:axiom6} implies the rules is a direct consequence of \autoref{lemma:do-probability} and \cite{pearl1995causal}.
\autoref{lemma:do-probability} implies that for all $I_0\subset\pazocal{N}$, $\mu_{I_0}(\cdot)$ coincides with \cite{pearl1995causal}'s do probability $Pr(x_{-I_0}\vert set(X_{I_0}=x_{i_0}))$.
Then, the rules follow from Theorem 3 in \cite{pearl1995causal}.
That the rules imply \autoref{ax:axiom6} follows from \autoref{lemma:rules}.
\end{proof}

\section{More on \autoref{th:newtheorem2} and connections to the Pearl model.}

\indent This section complements the discussion of \autoref{th:newtheorem2} in the main text. First, Appendix \ref{app:truncations} expands on some intuitions presented in the main text. Rules \ref{rule:rule1} and \ref{rule:rule2} apply to a truncated DAG, rather than the original causal DAG elicited from the DM's preferences. In Appendix \ref{app:truncations} we provide an intuition for why these truncations are part of the aforementioned rules. Second, Appendix \ref{app:conections} provides the connections between \autoref{th:newtheorem2} and \cite{pearl1995causal}. In that paper, Pearl defines an object--denoted a \emph{do-probability}--which can be identified from conditional probabilities via the use of two rules --which Pearl calls the \emph{rules of causal calculus}. Appendix \ref{app:conections} shows that our \autoref{ax:axiom6} is equivalent to the condition that do-probabilities coincide with our intervention-beliefs, so that Pearl's rules of causal calculus coincide with Rules \ref{rule:rule1} and \ref{rule:rule2}. Consequently, \autoref{ax:axiom6} shows the exact conditions under which Bayesian causality and Pearl's do-probability analysis coincide. Finally, Appendix \ref{app:causal-effect} discusses applications of \autoref{th:newtheorem2} for inferring the ``causal effect'' of one variable on another. In particular, we argue that the term ``casual effect'' may be understood in different ways, and that the DAG formalism allows us to distinguish different interpretations of the term ``casual effect'' without risking any ambiguity.

\subsection{Truncating DAGs and Rules \ref{rule:rule1} and \ref{rule:rule2}}\label{app:truncations}
In this section we study why truncating DAG $G(\overline{\succ})$ is necessary when applying rules \ref{rule:rule1} and \ref{rule:rule2}.

\paragraph{Comparing information transmission: interventions vs. conditional probability} Consider a DM's causal DAG, depicted in \autoref{fig:compare-b}. Because of Axioms \ref{ax:axiom3} through \ref{ax:axiom4}, the set of causes of a variable is an exhaustive source of information about that variable. Consequently, any information a variable $i$ can generate about a variable $j$ will be transmitted via the causes of $i$ (in \autoref{fig:compare-b}, $a$) , the consequences of $i$ (in \autoref{fig:compare-b}, $n$), or both. 

\begin{figure}[H]
\centering
\begin{tikzpicture}[thick]
\draw (3,3) node {$a$};
\draw (1,1) node {$i$};
\draw (3,1) node {$n$};
\draw (5,1) node {$j$};
\draw (5,0.3) node {};
\draw (5,3.3) node {};

\draw[black, ->] (2.5,2.5) -> (1.5,1.5);
\draw[black, ->] (1.5,1) -> (2.5,1);
\draw[black, ->] (3.5,1) -> (4.5,1);
\draw[black, ->] (3.5,2.5) -> (4.5,1.5);
\end{tikzpicture}
\caption{$i$ provides information about $j$ through $a$ and $n$}\label{fig:compare-b}
\end{figure}

 A decision maker who wants to predict the value of $j$ should pay a small $\varepsilon>0$ to observe the realized value of $i$. First, observing that $i$ took value $x_i$ is informative about what value $n$ might take, because $i$ causes $n$. In turn, this information on $n$ is informative about the value of $j$ because $n$ causes $j$. Second, observing that $i$ took value $x_i$ provides information about the values of $a$ that could have generated $x_i$; in turn, this is informative about the value of $j$. Consequently, the conditional probability $\mu(x_j\vert x_i)$ encodes the information $i$ transmits about $j$ via both the causes of $i$ and the consequences of $i$.

\indent A decision maker who wants to predict the value of $j$ will lose some information if they instead intervene the variable $i$. If $i$ is intervened to level $x_i$, $i$ will still provide the same information about $n$ as it did above, which will ultimately provide the same information about $j$. However, because $i$ is no longer caused by $a$, $i$ no longer provides information about $a$, so the path $i\leftarrow a\rightarrow j$ is no longer informative. In terms of the DAG in \autoref{fig:compare-b} intervening $i$ is \emph{as if} we severed the link $a\rightarrow i$, and so we lose one path through which $i$ is connected to $j$. If the path $i\rightarrow n \rightarrow j$ did not exist, interventions of $i$ would have no value for prediction $j$ because all information $i$ transmits about $j$ is contained in the causes of $i$.

\paragraph{Interventions, conditional probability, and truncating DAGs} From the previous analysis we conclude that interventions of $i$ and conditioning on the value of $i$ are interchangeable when information is transmitted through the consequences of $i$, but not when information is transmitted through the causes of $i$. Furthermore, if $i$ transmits information about $j$ only via the causes of $i$, intervening $i$ is uninformative about $j$. This underpins the intuition behind Rules \ref{rule:rule1} and \ref{rule:rule2}.

\indent In Rule \ref{rule:rule1}, $G(\overline{\succ})_{i^{out}}$ eliminates all arrows emerging from $i$. Because we eliminate all arrows emerging from $i$, this truncation eliminates all information $i$ provides about $j$ via $i$'s consequences, while still allowing $i$ to provide information about $j$ via $i$'s causes. If $i$ and $j$ are independent of each other according to $G(\overline{\succ})_{i^{out}}$ this means that eliminating the consequences of $i$ eliminates all information $i$ contains about $j$. In other words, $i$ is informative about $j$ exclusively through the consequences of $i$. Therefore, interventions on $i$ and conditioning on $i$ are interchangeable: $\mu_{x_i}(x_j)=\mu(x_j\vert x_i)$ holds. If $i$ and $j$ are not independent of each other according to $G(\overline{\succ})_{i^{out}}$, then $i$ is informative about $j$ through $i$'s causes, so $\mu_{x_i}(x_j)=\mu(x_j\vert x_i)$ cannot hold. In the example in \autoref{fig:compare-b}, $a$ is a common cause of $i$ and $j$ in $G(\overline{\succ})_{i^{out}}$. Therefore, $i$ and $j$ are not independent so $\mu_{x_i}(x_j)\neq \mu(x_j\vert x_i)$. Applying Rule \ref{rule:rule1} when some additional variables are intervened eliminates arrows into those variables to indicate that they are now exogenous primitives in the DM's model.

\indent In Rule \ref{rule:rule2} we have two cases. When $i_1$ is an indirect cause of some conditioning variable, $k\in K$, we can derive Rule \ref{rule:rule2} from Rule \ref{rule:rule1}. First we establish that interventions on $i_1$ can be exchanged with conditioning on $i_1$, then we use independence to drop the conditioning on $i_1$. Because the second step is applied after exchanging the intervention on $i_1$ with a conditional on $i_1$, no truncation on $i_1$ is needed. For example, in \autoref{fig:compare-b}, intervening $i$ only generates information on $j$ through the outgoing path $i\rightarrow n \rightarrow j$. So Rule \ref{rule:rule1} applies and we obtain $\mu_{x_i}(x_j\vert x_n)=\mu(x_j\vert x_n, x_i)$. Because $n$ already encodes all the information $i$ contains on $j$, $\mu(x_j\vert x_n, x_i)= \mu(x_j\vert x_n)$. Thus we get Rule \ref{rule:rule2}.\footnote{More generally, if $i_1$ and $j$ are independent conditional on $K\cup I_0$ according to $G(\overline{\succ})_{I_0^{in}}$, then they are also independent according to $G(\overline{\succ})_{ I_0^{in}, i_1^{out} }$, which is the condition for Rule \ref{rule:rule1} to apply. Furthermore, because $i_1$ is independent of $j$ conditional on $K\cup I_0$, we get $\mu_{x_{I_0}}(x_j\vert x_{K}, x_{i_1})=\mu_{x_{I_0}}(x_j\vert x_{K})$. Therefore, $\mu_{x_{I_0}, x_{i_1}}(x_j\vert x_{K})=\mu_{x_{I_0}}(x_j\vert x_{K})$ as desired.} When $i_1$ is not an indirect cause of some $k\in K$, we apply the truncation logic as we did before: intervening $i_1$ is like treating $i_1$ as an exogenous primitive so we eliminate all information $i_1$ provides to $j$ via $i_1$'s causes. If in this truncation, $i_1$ is independent of $j$ (conditional on $K$ after intervening $I_0$), then $i_1$ only transmitted this information to $j$ via its causes, not its consequences. Thus, we can treat the intervention of $i$ as uninformative about $j$; i.e., we can drop the intervention of $i$ when eliciting beliefs about $j$.

\subsection{Intervention beliefs, Markov representations, and do-probabilities: more connections to DAGs in causal inference}\label{app:conections}

 In this section we study the connection between \autoref{th:newtheorem2} and the identification results in \cite{pearl1995causal} and the subsequent literature. In \cite{pearl1995causal}, the DAG formalism is complemented with the assumption that probability distributions admit a \emph{Markov representation}, which we define below. From this Markov representation, \cite{pearl1995causal} defines \emph{do-probailities} (see \autoref{def:dop}). Theorem 3 in \cite{pearl1995causal} shows that Rules \ref{rule:rule1} and \ref{rule:rule2}, which Pearl denotes ``rules of causal calculus'', guarantee that do-probabilities are identified from the primitive probability distributions. In this section, we show that under \autoref{ax:axiom6} do-probabilities are equivalent to intervention beliefs in our paper. As we discuss at the end of this appendix, the equivalence between do-probabilities and intervention beliefs implies that the do-probability is completely characterized by the rules of causal calculus. 

\paragraph{Markov representation} Suppose $P\in \Delta(X_1\times...\times X_N)$ is a probability distribution, and suppose $G$ is a DAG that factorizes $P$. That is, $P$ satisfies that $P(x_1,...,x_N)=\Pi_{n=1}^N P(x_i\vert Pa(i))$ where $Pa(i)$ are the parents of variable $i$ in DAG $G$. Now, suppose that for each $i\in\{1,...,N\}$ there exists a random variable, $\varepsilon_i$, with range in some set $E_i$ and distribution $\phi_i\in \Delta(E_i)$, and a deterministic function, $f:X_{Pa(i)}\times E_i\rightarrow X_i$, that satisfy the following:
\begin{eqnarray}
\phi (\varepsilon_1,...,\varepsilon_N) &=& \Pi_{i=1}^N \phi_i(\varepsilon_i), \label{markov2} \\
P(x_1,...,x_N) &=& \phi(\{\varepsilon_1,...,\varepsilon_N: \: (\forall \: i) \: f_i(X_{Pa(i)}, \varepsilon_i)=x_i\}). \label{markov1}
\end{eqnarray}

\autoref{markov2} states that the ``noise'' terms, $(\varepsilon_i)_{i=1}^N$ are jointly distributed by $\phi$, with marginals $(\phi_i)_{i=1}^N$, and that they are all independent of each other. \autoref{markov1} states that each variable, $x_i$, is a deterministic function of its parents and the random term $\varepsilon_i$. \autoref{markov1} is what motivates the definition of a do-probability, to which we turn next.

\paragraph{Do-probabilities} In \cite{pearl1995causal}, the main objects of interest are \emph{do-probabilities}. Below, we define do-probabilities, illustrate them via a numerical example, and then discuss their conceptual interpretation. 
\begin{Definition}[Do-probability]\label{def:dop} 
Suppose $K$ is a subset of variables. The probability of $x_{-K}$ do-$x_K$, denoted $P(x_{-K}\vert do(X_K=x_K))$ is calculated as follows: first, eliminate from the Markov representation of $P$ all the functions $(f_k)_{k\in K}$; then replace each instance of $X_k$ in the remaining equations with the ``do-values'' $x_K$ and calculate the probability of the remaining variables according to $\phi_{-K}$.
\end{Definition}
\autoref{ex:dop} shows concretely an example of a do-probability:
 \begin{Example}\label{ex:dop}
 Consider the following DAG:
 \begin{figure}[H]
\centering
\begin{tikzpicture}[thick]
\draw (3,3) node {$a$};
\draw (1,1) node {$i$};
\draw (3,1) node {$n$};
\draw (5,1) node {$j$};
\draw (5,0.3) node {};
\draw (5,3.3) node {};

\draw[black, ->] (2.5,2.5) -> (1.5,1.5);
\draw[black, ->] (1.5,1) -> (2.5,1);
\draw[black, ->] (3.5,1) -> (4.5,1);
\draw[black, ->] (3.5,2.5) -> (4.5,1.5);
\end{tikzpicture}
\caption{$i$ provides information about $j$ through $a$ and $n$}\label{fig:compare}
\end{figure}

Suppose $P$ can be factorized by this DAG and, for simplicity, assume each variable takes values $\{0,1\}$. The following is a Markov representation for $P$:
\begin{eqnarray*}
f_a(\varepsilon_a) &=& \varepsilon_a, \\
f_i(a, \varepsilon_i) &=& a*\varepsilon_i,\\
f_j(\varepsilon_j) &=& (1-n)*a*\varepsilon_j,\\
f_n(\varepsilon_n) &= & i* \varepsilon_n, 
\end{eqnarray*}
where the noise terms are distributed as follows
\begin{eqnarray*}
\phi_a(\varepsilon_a=1)=P(a=1),\\
\quad \phi_i(\varepsilon_i=1)=P(i=1 \vert a=1),\\
\phi_i(\varepsilon_j=1)=P(j=1 \vert a=1, n=0),\\
\phi_n(\varepsilon_n=1)=P(n=1 \vert i=1).
\end{eqnarray*}
Suppose we want to calculate $P(x_j=1\vert do(X_i=0))$. We then eliminate the equation $f_i(a, \varepsilon_i) = a*\varepsilon_i$ from the Markov representation and replace $i=0$ everywhere $i$ appears in the remaining expressions. Thus, our modified Markov representation is as follows:
\begin{eqnarray*}
f_a(\varepsilon_a) &=& \varepsilon_a  \quad \phi_a(\varepsilon_a=1)=P(a=1)\\
f_j(\varepsilon_j) &=& (1-n)*a*\varepsilon_j \\
f_n(\varepsilon_n) &= & 0* \varepsilon_n.
\end{eqnarray*}
Using this reduced Markov representation we conclude that $n=0$ with probability $1$, which means $Pr(x_j=1\vert do(X_i=0))=\phi_a(\varepsilon_a=1)\phi_j(\varepsilon_j=1)=P(a=1)P(j=1\vert a=1, n=0)$. 
\end{Example}
 The conceptual idea is that the ``do'' operator removes the variables from the statistical model and replaces them with deterministic values. In this regard, the $f$ functions in a Markov representation are taken to have causal interpretations, with all confounding uncertainty emanating from the $\varepsilon$ terms. Under this interpretation, $i$ is a cause of $j$ if and only if $i\in Pa(j)$ and $Pr(x_j\vert do(x_{Pa(j)} ))$ is a non-trivial function of $x_i$ for some fixed value of $x_{Pa(j)\setminus\{i\}}$. Of course, in our model the DAG $G$ used to define a Markov representation is not given but, rather, an object we elicit from preferences. Furthermore, nothing guarantees that the DM's beliefs admit a Markov representation nor that intervention beliefs coincide with do-probabilities.
 
 \paragraph{Do-probabilities and intervention beliefs} We now connect do-probabilities with intervention beliefs. As discussed above, the interpretation of a do-probability is that, if the functional equations in a Markov representation are interpreted causally, then do-probabilities are a natural quantification of causality. However, \autoref{th:theorem1} shows that intervention beliefs are a natural representation of the DM's causal model, but nothing guarantees that the DM's intervention beliefs have a Markov representation. Thus, there is a gap between Bayesian causality and the do-probability quantification.
 
 \indent The results in Appendix \ref{app:theorem2} show that \autoref{ax:axiom6} (together with the other Axioms) implies and is implied by the condition that intervention beliefs are do-probabilities. This gives us the exact condition that bridges the gap between Bayesian causality and do-probability calculus. In particular, because \autoref{ax:axiom6} is also equivalent to the rules of causal calculus, then the rules of causal calculus are an alternative characterization of do-probabilities. 
 
 \indent We show the connection between intervention beliefs and do-probabilities in two ways: first, we construct a Markov representation explicitly. Second, we show that \autoref{ax:axiom6} is equivalent to a condition in \cite{pearl1995causal} that characterizes do-probability.
 
\paragraph{Constructing a Markov representation for $\overline{\succ}$} Assume that \autoref{ax:axiom6} holds; we use this to construct a Markov representation of $\mu_{x_{\pazocal{J}}}$ for each $\pazocal{J}\subset\pazocal{N}$. By Theorem \ref{th:theorem1}, Axioms \ref{ax:axiom3} through \ref{ax:axiom2} imply that $G(\overline{\succ})$ represents $\overline{\succ}$.
For each $i\in\pazocal{N}$, let $\varepsilon_i\sim U[0,1]$.
For each realization $x_i\in X_i$ and each $x_{Pa(i)}\in X_{Pa(i)}$, let $I(x_i,x_{Pa(i)})\subset [0,1]$ be an interval of length $\mu_{x_{Pa(i)}}(x_i)$.
Because $\sum_{x_i\in X_i}\mu_{x_{Pa(i)}}(x_i)=1$ for each $x_{Pa(i)}$, then $I(\cdot,x_{Pa(i)})$ can be chosen to form a partition of $[0,1]$.
Fix any variable $i\in\pazocal{N}$, and let $h_i(x_{Pa(i)},\varepsilon_i)=\sum_{x_i\in X_i}x_i\mathbbm{1}_{I(x_i,x_{Pa(i)})}(\varepsilon_i)$.
By construction, $(G,(h_1,...,h_N),(\varepsilon_1,...,\varepsilon_N))$ is a Markov representation of the beliefs elicited from $\overline{\succ}$.
Choose any $\pazocal{J}\subset\pazocal{N}$ and any $i\notin\pazocal{J}$.
By \autoref{ax:axiom6}, for each $x_{i}\in X_i$ and each $x_{Ca(i)\cup\pazocal{J}}\in X_{Ca(i)\cup\pazocal{J}}$, we obtain
\begin{eqnarray}
\mu_{x_\pazocal{J}}(x_i\vert x_{Ca(i)\setminus\pazocal{J}})=\mu(x_i\vert x_{Ca(i)}). \label{eq:th2eq1}
\end{eqnarray}
Our Markov representation implies
\begin{eqnarray}
\mu(x_i\vert x_{Ca(i)})&=&\phi(\{\varepsilon: h_i(x_{Ca(i)}, \varepsilon_i)=x_i\}) \nonumber
\\
&=&\mu(x_i\vert do(x_{\pazocal{J}}), x_{Ca(i)\setminus\pazocal{J}}). \label{eq:th2eq2}
\end{eqnarray}
By Equations \ref{eq:th2eq1} and \ref{eq:th2eq2}, $\mu_{x_\pazocal{J}}(x_i\vert x_{Ca(i)\setminus\pazocal{J}})=\mu(x_i\vert do(x_{\pazocal{J}}), x_{Ca(i)\setminus\pazocal{J}})$.
Because $G$ represents $\overline{\succ}$, for each $x\in X$,
\begin{eqnarray*}
\mu_{x_{\pazocal{J}}}(x_{-\pazocal{J}}) &=&\Pi_{i\notin\pazocal{J}}\mu_{x_\pazocal{J}}(x_i\vert x_{Ca(i)\setminus \pazocal{J}}) \\ 
&=&\Pi_{i\notin\pazocal{J}}\mu(x_i\vert do(x_{\pazocal{J}}), x_{Ca(i)\setminus\pazocal{J}})\\
&=& \mu(x_{-\pazocal{J}}\vert do(x_{\pazocal{J}})).
\end{eqnarray*}
Thus, $\mu_{x_{\pazocal{J}}}(\cdot)=\mu(\cdot\vert do(x_\pazocal{J}))\in \Delta(X_{-\pazocal{J}})$. The converse implication --that $\mu_{x_{\pazocal{J}}}(\cdot)=\mu(\cdot\vert do(x_\pazocal{J}))\in \Delta(X_{-\pazocal{J}})$ for all $\pazocal{J}\subset\pazocal{N}$-- is a direct consequence of \autoref{th:newtheorem2}: The do-probability representation implies the rules of causal calculus, which implies \autoref{ax:axiom6}.

\indent Alternatively, \cite{pearl1995causal} shows that do-probabilities are characterized by the following condition for all $\pazocal{J}\subset\pazocal{N}$:
\begin{eqnarray}
P(x_{-\pazocal{J}}\vert do(X_{\pazocal{J}} = x_{\pazocal{J})}) &=& \frac{\Pi P(x_i \vert x_{Pa(i)})}{ \Pi_{j\in\pazocal{J}} P(x_j\vert x_{Pa(j)}) }, \label{dop2} \\
P(x_{-\pazocal{J}}\vert do(X_{\pazocal{J}} = x_{\pazocal{J})}) &=&  {\Pi_{i\notin \pazocal{J} } P(x_i \vert x_{Pa(i)}}), \label{dop3}
\end{eqnarray}

\indent The intuition behind \autoref{dop2} is that the do-probability $P(\cdot\vert x_{\pazocal{J}})$ is meant to eliminate $x_{\pazocal{J}}$ from the statistical model, while still retaining the ``direct'' effects of $\pazocal{J}$ on the remaining variables. Given the factorization formula for $P$, this means that the quotient in \autoref{dop2} reduces to \autoref{dop3}, where all instances of $P(x_\pazocal{J}\vert x_{Pa(j)}) $ are eliminated for each $j\in\pazocal{J}$, while still allowing variables in $\pazocal{J}$ to affect other variables through the various conditional terms $x_{Pa(\cdot)}$. \autoref{lemma:do-probability} shows that \autoref{ax:axiom6} implies \autoref{dop2}. Because \autoref{dop2} implies the rules of causal calculus \citep{pearl1995causal}, and the rules of causal calculus imply \autoref{ax:axiom4} (\autoref{th:newtheorem2}), then \autoref{ax:axiom6}, \autoref{dop2}, and Rules \ref{rule:rule1} and \ref{rule:rule2} are all equivalent. 

\subsection{The ``causal effect'' of $i$ on $j$, and more applications of \autoref{th:newtheorem2}}\label{app:causal-effect}

\indent Consider the example in the introduction: Alex wants to understand what ``the causal effect'' of education on earnings is. Our discussions throughout the paper suggest at least two ways in which the phrase ``the causal effect of education on earnings'' can be understood; for clarity, we call these the ``direct causal effect'' and the ``indirect causal effect''. In this section we comment of different ways we can define ``causal effects'', and how they differ.
\paragraph{The direct causal effect of $i$ on $j$} Bayesian causality (\autoref{def:causality}) suggests a natural way to define the causal effect of $i$ on $j$. For a fixed value of the variables other than $i$ and $j$, $x_{-\{i,j\}}\in X_{-\{i,j\}}$, let  $\Gamma_{x_{-\{i,j\}}}: X_{i}\rightarrow \Delta(X_j)$, where $\Gamma_{x_{-\{i,j\}}}(x_i)=\mu_{x_{i}, x_{-\{i,j\}}}(x_j)$. \autoref{def:causality} states that $i$ causes $j$ if \emph{ceteris paribus} interventions on $i$ affect the distribution of $j$, so it is natural to define causal effects as how sensitive the distribution of $j$ is to interventions on $i$, for each fixed intervention of the variables in $\{i,j\}^\complement$. We call this the \emph{direct} causal effect of $i$ on $j$.
\paragraph{The indirect causal effect of $i$ on $j$} Alternatively, we might want to consider the effect an intervention on $i$ has on $j$, \emph{without intervening any other variables}. For example, a government that believes education is a common cause of lifetime earnings and ability, and that ability is a further cause of lifetime earnings (see \autoref{fig:appendix-example}). The government wants to know how lifetime earnings change if the government imposed a policy that forced everyone to get a high-school degree. In contrast to the previous definition, the government is not interested in understanding how $\mu_{x_E, x_A}\in\Delta(X_L)$ changes with $x_E$ for a fixed $x_A$, because the government is not keeping ability levels constant. On the contrary, the government understands that the education policy will have two effects on lifetime earnings: the direct effect $E\rightarrow L$ and the indirect effect $E\rightarrow A \rightarrow L$. The government is therefore interested in understanding how $\mu_{x_E}\in \Delta(X_L)$ changes with $x_E$ \emph{without intervening any other variable}. In general, a policy maker might be interested in how $\mu_{x_i}\in\Delta(X_j)$ changes with $x_i$, which is generally different than the direct causal effect described in the previous paragraph. 

\begin{figure}[H]
\centering
\begin{tikzpicture}[thick]
\node[label=above:{E}] (v0) at (0,2) {};
\node[label=left:{A}] (v1) at (-2,0) {};
\node[label=right:{L}] (v2) at (2,0) {};
\draw [->] (v0)--(v1);
\draw [->] (v0)--(v2);
\draw [->] (v1)--(v2);
\end{tikzpicture}
\caption{Education as a common cause of Ability and Lifetime earnings}\label{fig:appendix-example}
\end{figure}

\indent More generally, a researcher may wonder what is the effect that a policy on variable $j$ will haver on variable $i$, in the presence of other policies (say, on variables $\pazocal{K}\subset\pazocal{N}$). In this case, the researcher calculates how $\mu_{x_{\pazocal{K}}, x_j }(\cdot)\in\Delta(X_i)$ varies with $x_j$ for each fixed level $x_{\pazocal{K}}$. This is yet another way in which $i$ has an indirect causal effect on $j$.

\indent Below, we show two simple examples to illustrate the use of DAGs to identify indirect causal effects.

\paragraph{A simple example} Consider a DM's causal DAG, depicted in \autoref{fig:confounder} below. To identify the indirect causal effect, $\mu_{x_E}(x_{D})$ the DM needs to understand which variables mediate information between $E$ and $D$. In other words, which variables would the DM be willing to pay in exchange for observing their realization? Because $Z$ is a common cause of $E$ and $D$, this is a variable the DM will need to condition on. Furthermore, both $A$ and $B$ are indirect common causes of $E$ and $D$. Therefore, at least one, perhaps both, needs to be conditioned on.

\begin{figure}[H]
\centering
\begin{tikzpicture}[thick]
\node[label=above:{A}] (v0) at (-2.20,1.52){};
\node[label=above:{B}] (v1) at (1.40,1.46) {};
\node[label=below right:{\textcolor{blue}{D}}] (v2) at (1.40,-1.62) {};
\node[label=below left:{\textcolor{red}{E}}]  (v3) at (-2.20,-1.60) {};
\node[minimum size=5mm] (v4) at (-0.300,0.0820) {Z};
\draw [->] (v0) edge (v3);
\draw [->] (v0) edge (v4);
\draw [->] (v1) edge (v2);
\draw [->] (v1) edge (v4);
\draw [->] (v3) edge (v2);
\draw [->] (v4) edge (v2);
\draw [->] (v4) edge (v3);
\end{tikzpicture}
\caption{And extended confounding triangle}\label{fig:confounder}
\end{figure}

 We begin by noting that $\mu_{x_E}(x_D)=\sum \mu_{x_E}(x_D\vert x_A, x_Z)\mu_{x_E}(x_A,x_Z)=\sum\mu_{x_E}(x_D\vert x_A, x_Z)\mu_{x_E}(x_Z\vert x_A)\mu_E(x_A)$. Rule \ref{rule:rule2} implies $\mu_E(x_A)=\mu(x_A)$: once we remove the incoming arrows into $E$, the only paths connecting $E$ and $A$ are $E\rightarrow D\leftarrow Z\leftarrow A$ and $E\rightarrow D\leftarrow B\rightarrow Z\leftarrow A$. However, both these paths imply that $E$ and $A$ are independent.\footnote{The truncated DAG, $G_{E^{in}}$, represents any distribution $\mu$ that can be factorized as $\mu(A)\mu(B)\mu(E)\mu(Z\vert A,B)\mu(D\vert Z,B)$. More generally, see the D-separation criteria for a simple graphical tests of independence. In this case, because any undirected path joining $A$ to $E$ has head-to-head arrows, then $E$ and $A$ are independent.} Analogously, Rule \ref{rule:rule2} implies $\mu_E(x_Z\vert x_A)=\mu(x_Z\vert x_A)$. Finally, rule \ref{rule:rule1} implies $\mu_{x_E}(x_D\vert x_A, x_Z)=\mu(x_D\vert x_a, x_Z, x_E)$: if we eliminate the outgoing arrows from $E$, $E$ and $D$ are conditionally independent because any path joining $E$ to $D$ passes through either $Z$ or $A$. Therefore, the indirect causal effect of $E$ on $D$ is identified via the following condition:
\begin{eqnarray*}
\mu_{x_E}(x_D)=\sum \mu(x_D\vert x_A, x_Z, x_E)\mu(x_A,x_Z).
\end{eqnarray*}

\indent Finally, notice we could have carried out the same logic by using $B$ and not $A$ as our conditioning variable. This is useful in cases where some variables might be unobservable. In this case, the DM might understand that $B$ is a relevant variable, and is thus included in the model, but perhaps the DM cannot directly observe the values $B$ takes. As shown by the previous calculation this is not a problem because $\mu_{x_E}(x_D)$ can be identified from variables other than $B$.

\paragraph{Another example} The following DAG is taken from \cite{sebastiani2005genetic}. It shows a causal DAG that explores the connections between various genes and various disorders that these genes are associated with. Suppose that we were interested in the indirect causal effect of gene EDN1.3 (in red) on EDN1.7 (in blue). A priori, there are many paths through which these variables convey information on others. Therefore, many confounding paths could exist between one variable and another. Conveying all of these possible confounding paths in words would be, to put it colloquially, a nightmare. Furthermore, at face value it seems hard to understand which variables must be controlled for in order to elicit the desired causal effect. In other words, given the amount of causal links, it seems hard to apply the identification formulas in \autoref{th:newtheorem2}.

\begin{figure}[H]
\centering
\scalebox{0.5}{
\begin{tikzpicture}
\node (v0) at (-2.64,-16.6) {\textcolor{red}{EDN1.3}};
\node (v1) at (-1.23,-18.7) {\textcolor{blue}{EDNI1.7}};
\node (v2) at (-0.481,-2.64) {SELP.22};
\node (v3) at (-1.37,-3.94) {SELP.17};
\node (v4) at (-0.280,-7.26) {ECE1.13};
\node (v5) at (1.20,-8.77) {ECE1.12};
\node (v6) at (3.56,-11.2) {MET.6};
\node (v7) at (4.50,-12.8) {CAT};
\node (v8) at (2.51,8.20) {CSF2.3};
\node (v9) at (-1.28,8.20) {CSF2.4};
\node (v10) at (2.90,3.70) {TGFBR3.10};
\node (v11) at (1.35,2.06) {TGFBR3.2};
\node (v12) at (-0.0920,5.52) {TGFBR3.8};
\node (v13) at (3.14,-5.23) {SELP.12};
\node (v14) at (-3.21,4.79) {BMP6.13};
\node (v15) at (-2.38,7.07) {BMP6.12};
\node (v16) at (3.99,-20.1) {\textcolor{purple}{ANXA2.8}};
\node (v17) at (-1.18,0.789) {BMP6.11};
\node (v18) at (2.20,-9.86) {ADCY9.8};
\node (v19) at (-3.09,-2.25) {BMP6.10};
\node (v20) at (-1.99,-4.56) {BMP6};
\node (v21) at (0.105,-12.6) {ANXA2.11};
\node (v22) at (1.44,-15.4) {ANXA2.13};
\node (v23) at (4.32,-3.49) {MET.5};
\node (v24) at (-2.42,2.35) {BMP6.9};
\node (v25) at (-1.42,-9.21) {ANXA2.7};
\node (v26) at (-4.25,6.61) {BMP6.14};
\node (v27) at (4.45,7.34) {TGFBR3.7};
\node (v28) at (-1.33,-13.5) {ANXA2.12};
\node (v29) at (3.65,1.13) {SELP.14};
\node (v30) at (1.42,-3.29) {Stroke};
\node (v31) at (-3.41,-11.2) {ANXA2.5};
\node (v32) at (1.49,6.46) {TGFBR3.9};
\node (v33) at (1.70,-21.8) {EDNI1.6};
\node (v34) at (-4.06,-21.1) {\textcolor{purple}{EDN1.9}};
\node (v35) at (-0.282,-22.3) {EDN1.10};
\draw [->] (v3) edge (v2);
\draw [->] (v4) edge (v3);
\draw [->] (v5) edge (v4);
\draw [->] (v6) edge (v13);
\draw [->] (v7) edge (v23);
\draw [->] (v8) edge (v9);
\draw [->] (v10) edge (v11);
\draw [->] (v10) edge (v32);
\draw [->] (v11) edge (v32);
\draw [->] (v12) edge (v11);
\draw [->] (v12) edge (v32);
\draw [->] (v0) edge (v28);
\draw [very thick, ->] (v0) edge (v1);
\draw [->] (v14) edge (v15);
\draw [->] (v14) edge (v24);
\draw [->] (v35) edge (v1);
\draw [->] (v33) edge (v35);
\draw [->] (v33) edge (v1);
\draw [->] (v16) edge (v0);
\draw [->] (v16) edge (v7);
\draw [->] (v16) edge (v5);
\draw [->] (v16) edge (v33);
\draw [->] (v16) edge (v18);
\draw [->] (v16) edge (v20);
\draw [->] (v16) edge (v22);
\draw [->] (v17) edge (v24);
\draw [->] (v19) edge (v17);
\draw [->] (v19) edge (v14);
\draw [->] (v20) edge (v19);
\draw [->] (v22) edge (v21);
\draw [->] (v23) edge (v6);
\draw [->] (v23) edge (v27);
\draw [->] (v23) edge (v29);
\draw [->] (v34) edge (v35);
\draw [->] (v34) edge (v26);
\draw [->] (v34) edge (v0);
\draw [->] (v25) edge (v21);
\draw [->] (v25) edge (v31);
\draw [->] (v26) edge (v15);
\draw [->] (v27) edge (v8);
\draw [->] (v27) edge (v10);
\draw [->] (v28) edge (v22);
\draw [->] (v28) edge (v21);
\draw [->] (v28) edge (v25);
\draw [->] (v28) edge (v31);
\draw [->] (v30) edge (v9);
\draw [->] (v30) edge (v10);
\draw [->] (v30) edge (v6);
\draw [->] (v30) edge (v13);
\draw [->] (v30) edge (v2);
\draw [->] (v30) edge (v4);
\draw [->] (v30) edge (v5);
\draw [->] (v30) edge (v15);
\draw [->] (v30) edge (v14);
\draw [->] (v30) edge (v17);
\draw [->] (v30) edge (v18);
\draw [->] (v30) edge (v19);
\draw [->] (v30) edge (v12);
\draw [->] (v30) edge (v23);
\draw [->] (v30) edge (v29);
\end{tikzpicture}}
\caption{Causal DAG describing the associations of SNPs in candidate genes with the likelihood of developing nonhemorrhagic stroke in sickle cell anemia \citep{sebastiani2005genetic}.}\label{fig:sebastiani}
\end{figure}

\indent \indent This example highlights two advantages of the DAG language, and the rules of causal calculus. First, DAGs provide a succinct and transparent way to represent all causal interactions in the DM's model of interest. \autoref{th:theorem1} provides the exact conditions on behavior so that DAGs can indeed be a representation of a DMs causal model. Second, the rules of causal calculus are easily programable into web-based apps. For instance, analyzing this graph by applying the rules of causal calculus reveals that only 2 variables (marked in maroon in \autoref{fig:sebastiani}) need to be controlled in order to identify both the causal effect of $EDN1.3$ on $EDN1.7$ and the analogous indirect causal effect. Having identified this, it is straightforward to use the rules in \autoref{th:newtheorem2} to identify the intervention belief $\mu_{x_{EDN1.3}}(x_{EDN1.7})$ like we did in \autoref{fig:confounder}. In other words,  the rules in \autoref{th:newtheorem2} can be applied with computational ease.
  
\end{document}